\documentclass[12pt]{article}

\usepackage{amsfonts}
\usepackage{amsmath}
\usepackage{amssymb}
\usepackage{amsthm}
\usepackage{mathtools}
\usepackage{authblk}
\usepackage{appendix}
\usepackage{booktabs}
\usepackage{bm}
\usepackage{bbm}
\usepackage{dsfont}
\usepackage[dvipsnames]{xcolor}
\usepackage{natbib}
\usepackage{subcaption}
\usepackage{lineno}
\usepackage[hidelinks]{hyperref}
\usepackage{enumitem}
\usepackage[T1]{fontenc}
\usepackage{multirow}
\usepackage{soul}
\usepackage[font = small]{caption}
\usepackage{cancel}
\usepackage{algorithmicx}
\usepackage{algorithm}
\usepackage{algpseudocode}

\usepackage{graphicx,url}

\newtheorem{theorem}{Theorem}
\newtheorem{lemma}{Lemma}

\newtheorem{corollary}{Corollary}

\newenvironment{assumptionp}[1]{
  
  \assumptionalt
}{\endassumptionalt}

\usepackage[utf8]{inputenc}  


\graphicspath{ {../images/} }

\newcommand{\blind}{1}
\newcommand{\spacing}{1.1}

\addtolength{\oddsidemargin}{-.5in}%
\addtolength{\evensidemargin}{-.8in}%
\addtolength{\textwidth}{1in}%
\addtolength{\textheight}{1.7in}%
\addtolength{\topmargin}{-1in}%

\newcommand{\norm}[1]{\left\lVert#1\right\rVert}

\newcommand{\bA}{\textbf{A}}
\newcommand{\bL}{\textbf{L}}
\newcommand{\bP}{\textbf{P}}
\newcommand{\bQ}{\textbf{Q}}

\newcommand{\bR}{\textbf{R}}
\newcommand{\bI}{\textbf{I}}
\newcommand{\bY}{\textbf{Y}}
\newcommand{\bx}{\textbf{x}}
\newcommand{\rd}{\mathrm{d}}
\newcommand{\pp}{\mathbb{P}}
\newcommand{\Dir}{\mathrm{Dir}}
\newcommand{\balpha}{\bm{\alpha}}
\newcommand{\bphi}{\bm{\phi}}
\newcommand{\bpsi}{\bm{\psi}}
\newcommand{\bdelta}{\bm{\delta}}
\newcommand{\T}{\intercal}
\newcommand{\one}{\mathbbm{1}}
\newcommand{\E}{\mathbb{E}}

\begin{document}

\def\spacingset#1{\renewcommand{\baselinestretch}%
{#1}\small\normalsize} \spacingset{1}

\if1\blind
{
  \title{\bf Efficient Bayesian Inference for Discretely Observed Continuous Time Markov Chains}
    \author[1]{Tao Tang}
    \author[2]{Lachlan Astfalck}
    \author[3]{David Dunson}
    \affil[1]{Department of Mathematics, Duke University, Durham, NC, USA}
    \affil[2]{School of Physics, Mathematics \& Computing, The University of Western Australia, Australia}
    \affil[3]{Department of Statistical Science, Duke University, Durham, NC, USA}
    
    \setcounter{Maxaffil}{0}
    \renewcommand\Affilfont{\itshape\small}
  \maketitle
} \fi

\if0\blind
{
  \bigskip
  \bigskip
  \bigskip
  \begin{center}
    {\LARGE\bf Low-Rank Estimation of Continuous Time Markov Chains}
\end{center}
  \medskip
} \fi

\bigskip
\begin{abstract}
  Inference for continuous-time Markov chains (CTMCs) becomes challenging when the process is only observed at discrete time points. The exact likelihood is intractable, and existing methods often struggle even in medium-dimensional state-spaces. We propose a scalable Bayesian framework for CTMC inference based on a pseudo-likelihood that bypasses the need for the full intractable likelihood. Our approach jointly estimates the probability transition matrix and a biorthogonal spectral decomposition of the generator, enabling an efficient Gibbs sampling procedure that obeys embeddability. Existing methods typically integrate out the unobserved transitions, which becomes computationally burdensome as the number of data or dimensions increase. The computational cost of our method is near-invariant in the number of data and scales well to medium-high dimensions. We justify our pseudo-likelihood approach by establishing  theoretical guarantees, including a Bernstein–von Mises theorem for the probability transition matrix and posterior consistency for the spectral parameters of the generator. Through simulation and applications, we showcase the flexibility and robustness of our approach, offering a tractable and scalable approach to Bayesian inference for CTMCs.
\end{abstract}

\noindent%
{\it Keywords:} continuous-time Markov chain, generator matrix, pseudo-likelihood, Bayesian inference

\spacingset{\spacing}

\section{Introduction} \label{sec:intro}

Estimating the dynamics of a continuous-time Markov chain (CTMC) from discretely observed states remains challenging, especially for large state spaces or long time series. The generator matrix $\bL$ that governs the CTMC must satisfy structural constraints, which complicate likelihood evaluation and render full Bayesian inference computationally demanding and often ill-posed. When, instead, the sequence of states and the exact holding times are observed, the CTMC is fully observed and the likelihood is tractable in $\bL$. However, such data are rarely available in application, where typically only the states at discrete time points are recorded. Examples include biological systems \citep{zhao2016bayesian}, healthcare \citep{tahami2023estimating}, finance \citep{inamura2006estimating,dos2020capturing}, physical and environmental processes \citep{lienard2016modelling}, genetics \citep{hajiaghayi2014efficient}, and queueing systems \citep{asmussen2003applied}; all of which sample observations discretely. In principle, Bayesian estimation can accommodate such data by treating the unobserved holding times as missing and integrating over them during inference \citep[e.g.][]{bladt2005statistical}. In practice, however, this becomes computationally intensive as the size of the state space and the number of observations grow. In addition to computational barriers, a more fundamental problem arises from the fact that the discrete-time transitions observed in the data may be consistent with many different generators. This not only induces multi-modality in the posterior over $\bL$, but also undermines the identifiability of the underlying dynamics, potentially leading to dramatically different conclusions when extrapolating beyond the sampling interval. We present a scalable Bayesian method, for CTMCs with medium-sized (10s–100s) states, that avoids latent path integration and resolves non-identifiability through a biorthogonal parametrization of $\bL$, thus ensuring structural validity and identifiability.

Several approaches have been proposed for inference of discretely observed CTMCs. In Bayesian inference, the most common strategy is to treat the unobserved holding times as missing data and sample full latent paths, after which inference is straightforward via standard Gibbs updates \citep{bladt2005statistical,fearnhead2006exact}. Simple Metropolis-Hastings algorithms that do not rely on latent path integration have been proposed \citep{riva2023estimation}, although they require delicate tuning of the proposal densities.
Both approaches are mathematically valid; however, computational issues arise for large data or dimensions. Methods have been developed for high/infinite-dimensional state spaces, although these methods rely on lower-dimensional or structured parametrizations of the generator that implicitly enforce the required constraints, and do not estimate $\bL$ directly from the data. Examples include \cite{sherlock2021direct} who evaluate the likelihood directly using computationally efficient properties of the matrix exponential; \cite{biron2023pseudo} who develop a pseudo-marginal likelihood approximation; and \cite{hajiaghayi2014efficient} who estimate the generator using a particle-based Monte Carlo approach. Outside of Bayesian methods, optimization approaches exist \cite[e.g.][]{israel2001finding,metzner2007generator} although typically these methods lack principled uncertainty quantification and struggle in higher-dimensional state spaces. Notably, \cite{crommelin2006fitting} also constrain $\bL$ to the space of biorthogonal matrices. However, as we discuss in detail below, their method imposes low-rank constraints on $\bL$ that render the chain reducible, limiting the practical applicability.

Recent research has primarily focused on the computational aspects of algorithmic development; however, outside of \cite{crommelin2006fitting}, limited attention has been given to the embedding problem for Markov chains \citep[see][for rigorous definition]{kingman1962imbedding}. When the CTMC is observed at regular intervals, the likelihood depends on the generator $\bL$ only through the corresponding discrete-time probability transition matrix $\bP = \exp\{\Delta \bL\}$, where $\Delta$ denotes the sampling interval and, here, $\exp\{\cdot\} = \sum_{k=0}^\infty (\cdot)^k/k!$ is the matrix exponential. In principle, we may estimate $\bP$ directly and then recover $\bL$ via the matrix logarithm. However, the mapping from $\bP$ to $\bL$ is one-to-many, and further, not every transition matrix $\bP$ admits a valid generator. This implies that even when $\bP$ is estimated accurately, recovering a valid $\bL$ may be impossible, or may yield multiple incompatible solutions. In general, the embedding problem renders inference on $\bL$ as fundamentally ill-posed. An analytical solution is available for $2 \times 2$ matrices \citep[see][]{kingman1962imbedding} and $3 \times 3$ matrices \citep[see][]{carette1995characterizations}, but limited findings are available for the general case of $n \times n$ matrices. 
To make inference tractable and well-posed, we adopt a parametrization that directly encodes the generator $\bL$ in a way that guarantees both structural validity and identifiability from discrete-time data.

Although the existing literature offers valuable contributions, important limitations remain. Optimization-based approaches typically yield only point estimates and lack principled uncertainty quantification. Bayesian methods typically treat generator estimation as a missing-data problem and are computationally expensive and difficult to scale to large data and state spaces. Methods such as \cite{sherlock2021direct} designed for large state-spaces are predicated on some known parametrization of $\bL$ and do not respect embeddability if estimating $\bL$ directly. None of the existing approaches combine scalability, coherent uncertainty quantification, and estimation of an embeddable $\bL$ in a unified framework. To address this, we propose a pseudo-Bayesian approach built upon a novel pseudo-likelihood. We restrict $\bL$ to the class of bi-orthogonal matrices parametrized by its spectral decomposition $\{\lambda_k, \phi_k, \psi_k\}_{k=1}^m$, ensuring uniqueness through the matrix exponential and logarithm. Rather than estimating $\bL$ directly from the data, the pseudo-likelihood jointly infers the transition matrix $\bP$ and the spectral parameters $\{\lambda_k, \phi_k, \psi_k\}_{k=1}^m$. In effect, the pseudo-likelihood combines the empirical transition structure of the observed data with a regularization term that favors agreement of $\bP$ with $\bL$. By estimating $\bP$ and $\bL$ as parameters with a joint probability structure, our method circumvents the need to explicitly enforce embeddability in $\bP$, enabling efficient and scalable inference.

We provide rigorous theoretical justification for our pseudo-Bayesian approach with two main contributions. First, we establish a Bernstein--von Mises (BvM) theorem for the posterior distribution of $\bP$, showing that it concentrates around the empirical estimator at the standard parametric rate. The proof requires non-standard tools to account for the dependence structure induced by the Markovian sampling process. Second, we prove posterior consistency for the spectral parameters under a generative CTMC model. Together, these results imply that the empirical transition matrix converges to the true $\bP$, and the spectral decomposition of $\bL$ converges under the true model. This provides a theoretical foundation for the suitability of our pseudo-likelihood approach. We empirically support this with simulation studies where we compare our method against existing literature and demonstrate equivalent accuracy with significantly better computational performance. Finally, we examine a larger-dimensional state space example and provide an application to a meta-stable diffusion process canonical in the literature.

This article proceeds as follows. Section~\ref{sec:ctmc} provides a mathematical overview of CTMCs, introduces the model and its spectral decomposition, and defines the proposed pseudo-likelihood. Section~\ref{sec:inference} describes the pseudo-posterior inference procedure. Section~\ref{sec:justification} establishes the theoretical guarantees of our method, including a BvM theorem and posterior consistency. Sections~\ref{sec:inference} and \ref{sec:justification} are modular and may be read independently, depending on the reader's interest. Section~\ref{sec:applications} presents numerical experiments and applications, and Section~\ref{sec:conclusion} concludes.

\section{Modeling Continuous Time Markov Chains} \label{sec:ctmc}

\subsection{Definition}

A CTMC is a continuous-time stochastic process $\{X_t : t \geq 0\}$ defined in the probability space $(\Omega, \mathcal{F}, \mathbb{P})$. For any time $t \geq 0$, the CTMC takes values in the countable set $\mathcal{S}$. Equipped with the discrete metric and constructed via jump processes, any sample path $t \mapsto X_t(\omega)$ for $\omega \in \Omega$ is right continuous with left limits. We formally define a CTMC by three elements: (1) the state space $\mathcal{S}$, (2) a transition rate matrix, or generator, $\textbf{L}$ with dimension $|\mathcal{S}| \times |\mathcal{S}|$ that describes the dynamics of the process, and (3) the distribution of the initial state $X_0 \sim \pi_0$. We assume time-homogeneity of the process and $\mathcal{S}$ to be discrete so that $\mathcal{S} = \{1, \dots, m\}$, and thus $\textbf{L}$ is an $m \times m$ time-invariant matrix. Denote by $\bL(p,q)$ the $(p,q)$th element of $\bL$; for $p \neq q$ the element $\bL(p,q)$ is non-negative and describes the instantaneous rate of transition from state $p$ to state $q$. For $p = q$, $\bL(p,p)$ is the instantaneous rate of leaving state $p$, defined as $\bL(p,p) = - \sum_{q \neq p} \bL(p,q)$, ensuring the rows of $\bL$ sum to zero for all $p$. Herein, we assume that the full continuous sample path is not observed; instead, the process is observed only at a finite collection of discrete time points.

From $\bL$, we can define the transition probability matrix $\bP_t$ for $t \geq 0$ with $(p,q)$th elements $\bP_t(p, q) = \pp(X_t = q \mid X_0 = p)$. The transition probability matrix, $\bP_t$, satisfies the Kolmogorov forward equation
\begin{equation*} \label{eqn:kolmogorov}
  \frac{\rd \bP_t}{\rd t} = \bP_t \bL, \quad \bP_0 = \bI,
\end{equation*}
with solution $\bP_t = \mathrm{exp}\{t \bL\} = \sum_{n=0}^{\infty} \frac{(t \bL)^n}{n!}$, where here $\mathrm{exp}\{\cdot\}$ denotes the matrix exponential. Thus, in theory, if we can obtain $\bP$ so too do we obtain $\bL$ via $\bL = t^{-1} \log \{\bP_t\}$; although, in practice, there are a number of difficulties.

First, computation of the matrix exponential and logarithm is often intractable, and so even if $\bP_t$ is known exactly, not always will $\bL$ be computationally available. A sufficient condition that guarantees computational tractability is when $\bL$ admits a biorthogonal decomposition $\bL = \Phi \Lambda \Psi^\T$ such that $\Lambda$ is diagonal and $\Phi^\T \Psi = \bI$. Then, $\exp\{\bL\} = \Phi \exp\{\Lambda\} \Psi^\T$ and $\exp\{\Lambda\}_{ii} = \exp\{\Lambda_{ii}\}$. Even still, in practice $\bP_t$ is unknown and so must be estimated by some $\hat{\bP}_t$. When the data are observed irregularly, $\bP_t$ must be estimated as a function of continuous $t$ and this may become prohibitively difficult in practice. Even for regularly observed data, for instance, given a length $n$ observation with regular observations $\{x_{j}\}$, for ${j \in \{1, \dots, n\}}$, the estimator
\begin{equation} \label{eqn:P_tilde}
  \hat{\bP}_t = \frac{\sum_{i = 0}^n \mathbbm{1}\{x_{it} = p\} \mathbbm{1}\{x_{(i+1)t} = q\}}{\sum_{i = 0}^n \mathbbm{1}\{x_{it} = p\}},
\end{equation}
where $\mathbbm{1}\{\cdot\}$ is the indicator function, is a valid transition probability matrix such that $\sum_{q} \hat{\bP}_t(p,q) = 1$ for all $q$ and $\hat{\bP}_t(p, q) \geq 0$ for all $p,q$. However, $\hat{\bL} = t^{-1} \log \{\hat{\bP}_t\}$ will not necessarily satisfy embeddability and so is not necessarily a valid generator of a CTMC.

Our goal is to infer $\bL$, with uncertainty, that characterizes the CTMC. 
Estimation of $\bL$ in continuous time is more challenging than in discrete time, as the likelihood is generally intractable. This issue arises as in continuous time we do not observe the transition times but only have estimation of the state at discrete time-steps; we describe this in more detail in Section~\ref{sec:spec_decomp}, below. In this paper, we offer a solution for when $\bL$ is biorthogonal and the data $\{x_t\}_{t = 1}^n$ are observed at equally spaced times $t_i = i \Delta$.

\subsection{Spectral decomposition of the generator} \label{sec:spec_decomp}

Similar to \cite{crommelin2006fitting}, we define the generator matrix $\bL$ via its spectral decomposition,
\begin{equation} \label{eqn:L_spectral}
  \bL(p,q) = \sum_{k=1}^m \lambda_k \phi_k(p) \psi_k(q),
\end{equation}
where $\lambda_k$ are the eigenvalues of $\bL$, and $\phi_k, \psi_k \in \mathbb{R}^m$ are the corresponding right and left eigenvectors that define the $k$th columns of $\Phi$ and $\Psi$, respectively. These satisfy
\begin{equation*}
  \bL \phi_k = \lambda_k \phi_k, \quad \psi_k^\T \bL = \lambda_k \psi_k^\T, \quad \text{and} \quad \psi_k^\T \phi_l = \delta_{kl},
\end{equation*}
where $\delta_{kl} \coloneqq \mathbbm{1}\{k = l\}$ denotes the Kronecker delta. The eigenvectors are ordered so that $\mathrm{Re}(\lambda_k) \geq \mathrm{Re}(\lambda_{k+1})$, and we denote the $p$th components of the eigenvectors as $\phi_k(p)$ and $\psi_k(q)$ for $p, q \in \mathcal{S}$. For the representation in \eqref{eqn:L_spectral} to provide a valid probability matrix, we have $\lambda_1 = 0$, $\phi_1 = (1, \dots, 1)^\T$, and the corresponding left eigenvector $\psi_1$ encodes the invariant distribution, $\psi_1^\T = \mu$.

The spectral representation of $\bL$ in \eqref{eqn:L_spectral} provides a natural avenue through which we can compute the matrix exponential. Indeed, the transition probability matrix has the representation
\begin{equation*}
  \bP_\Delta = \sum_{k=1}^m \exp\{\lambda_k \Delta\} \phi_k \psi_k^\T,
\end{equation*}
where $\Delta$ denotes the sampling interval. Where unambiguous, we suppress the dependence on $\Delta$ by writing $\bP_\Delta \equiv \bP$.

If the full CTMC path were observed, including the sequence of visited states $\mathbf{x}_r = \{x_0, x_1, \dots, x_r\}$ and holding times $\mathbf{t}_r = \{t_0, t_1, \dots, t_r\}$, then the likelihood is tractable in $\bL$ and takes the form
\begin{equation} \label{eqn:L_likelihood}
  \mathcal{L}(\mathbf{x}_r, \mathbf{t}_r \mid \bL) = \prod_{p \neq q} \bL(p,q)^{N_{pq}} \prod_p \exp\{\bL(p,p) T_p\},
\end{equation}
where $N_{pq} = \sum_{i = 1}^r \mathbbm{1}\{x_{i-1} = p, x_i = q\}$ counts the number of transitions from $p$ to $q$, and $T_p = \sum_{i = 0}^r t_i \mathbbm{1}\{x_i = p\}$ is the total time spent in state $p$. 

In contrast, for discrete-time observations $\mathbf{x}_n = \{x_0, x_1, \dots, x_n\}$ sampled at interval $\Delta$, the discrete-time likelihood corresponds to that of a discrete-time Markov chain,
\begin{equation} \label{eqn:P_likelihood}
  \mathcal{L}(\mathbf{x}_n \mid \bP) = \prod_{i=1}^n \bP(x_{i-1}, x_i).
\end{equation}
Although the complete-data likelihood \eqref{eqn:L_likelihood} is tractable in $\bL$, the discretely observed likelihood \eqref{eqn:P_likelihood} depends on $\bL$ through the matrix exponential, making it a non-linear and computationally intensive function of $\bL$.

Assuming the spectral representation \eqref{eqn:L_spectral} exists, the likelihood can be re-expressed in terms of the spectral components,
\begin{equation} \label{eqn:spec_likelihood}
  \mathcal{L}(\mathbf{x}_n \mid \bP) = \mathcal{L}(\mathbf{x}_n \mid \{\lambda_k, \phi_k, \psi_k\}_{k=1}^m) = \prod_{i=1}^n \left\{\sum_{k=1}^m \exp\{\lambda_k \Delta\} \phi_k(x_{i-1}) \psi_k(x_i)\right\}.
\end{equation}
This parametrization provides an explicit link between the probability transition structure and the generator. However, it still requires knowledge of the eigen-decomposition of $\bL$, which is generally unknown in practice.

\subsection{Generator Rank}
In the discrete-time Markov chain literature, a common simplifying assumption is to assume low-rank structure in $\bP$ so that
\[\bP \approx \sum_{k=1}^d 
\Lambda_k \phi_k \psi_k^\T,\]
where $d \ll m$ and $\Lambda_k$ is the $k$th eigenvalue of $\bP$ \citep[e.g.][]{zhang2019spectral}. \cite{crommelin2006fitting} extend this notion to the continuous-time case by similarly assuming a low-rank structure for $\bL$. However, assumptions on the rank of $\bL$ have direct consequences on the degree of reducibility and may lead to unintended and unrealistic behavior. Assume $\bL$, and so too $\bP$, describe an irreducible Markov chain. Define $\Lambda_k = \exp\{\lambda_k \Delta\}$, noting that this relationship holds always, and is not reliant on the assumption of biorthogonality. According to the Perron–Frobenius theorem, the spectral radius of an irreducible $\bP$ is 1, that is, the leading eigenvalue is $\Lambda_1 = 1$. Further, if $\bP$ is also aperiodic then $|\Lambda_i| < 1$, $i \geq 2$ with strict inequality. Thus, $\lambda_1 = 0$ as we have previously noted, and $\lambda_i < 0$ for $i \geq 2$. As there is exactly one zero eigenvalue and the rest are guaranteed to have strictly negative real components, $\mathrm{rank}(\bL) = m-1.$ This result may be further generalized to say that if $c$ is the number of closed communicating classes, then $\mathrm{rank}(\bL) = m-c.$ 
Thus, imposing the rank of $\bL$ and setting $\mathrm{rank}(\bL) = d \ll m$ implicitly imposes $m-d$ closed communicating classes and, outside the trivial case of $d = m-1$, contradicts any assumption of irreducibility.

\subsection{Definition of the pseudo-likelihood} \label{sec:low_rank}

The generator $\bL$ is related to the data through $\bP = \exp\{\bL \Delta\}$. In principle, inference can proceed in either direction: we may first estimate $\bP$ from the data and compute $\bL$ via the matrix logarithm, or instead posit a model for $\bL$ and evaluate its fit via the implied $\bP$. However, both these approaches are deficient. As we have discussed, the set of embeddable matrices is a strict subset of all stochastic matrices and so not every stochastic matrix is embeddable. Thus, the matrix logarithm of an empirical $\bP$ may not yield a valid generator. Direct modeling of $\bL$ does not yield a tractable form for conditional updates under the discrete-time likelihood in~\eqref{eqn:P_likelihood}. We argue that a principled alternative is to model $\bP$ and $\bL$ jointly, treating them as coupled objects linked by the exponential map. In particular, we assume that $\bL$ admits a biorthogonal decomposition, and impose this structure by regularizing $\bP$ toward the spectral form $\sum_{k=1}^m \exp\{\lambda_k \Delta\} \phi_k \psi_k^\T$. This avoids the embedding problem whilst enabling tractable inference. Define the pseudo-likelihood,
\begin{equation} \label{eqn:pseudo_likelihood}
  \log \mathcal{L}\left(\mathbf{x}_n \mid \bP, \{\lambda_k, \phi_k, \psi_k\}_{k=1}^m\right) = \sum_{i=1}^n \log \bP(x_{i-1}, x_{i}) - \frac{\nu}{2} \lVert \bP - \sum_{k=1}^m \exp\{\lambda_k \Delta\} \phi_k \psi_k^\T \rVert_\mathrm{F}^2,
\end{equation}
where $\nu > 0$ is a regularization parameter and $\norm{\cdot}_\mathrm{F}$ denotes the Frobenius norm. This likelihood has two terms, a log-likelihood for the observed transitions under $\bP$, and a penalty that encourages agreement between $\bP$ and its spectral form.

To rigorously justify the use of \eqref{eqn:pseudo_likelihood} as a pseudo-likelihood, we establish theoretical guarantees in Section~\ref{sec:justification}, including a Bernstein--von Mises theorem for the posterior distribution of $\bP$ and posterior consistency for the spectral parameters $\{\lambda_k, \phi_k, \psi_k\}_{k=1}^m$. Before returning to these results, we complete the model specification by describing the prior distributions and inference strategy.

\section{Pseudo-posterior inference} \label{sec:inference}

Specification of the Bayesian model begins with the prior distribution over the unknown parameters, $\pp(\bP, \{\lambda_k, \phi_k, \psi_k\}_{k=1}^m)$. The parameter space can be high-dimensional and is subject to stringent constraints: $\bP$ must be a valid transition probability matrix, and the collection $\{\lambda_k, \phi_k, \psi_k\}_{k=1}^m$ must constitute a valid spectral decomposition. We therefore seek priors that respect these constraints, permit efficient computation, and are sufficiently expressive to encode informative prior beliefs when available. Assuming prior independence, we factor the joint prior as $\pp(\bP, \{\lambda_k, \phi_k, \psi_k\}_{k=1}^m) = \pp(\bP) \cdot \pp(\{\lambda_k, \phi_k, \psi_k\}_{k=1}^m)$, and we adopt (approximately) conjugate forms to enable efficient Gibbs sampling. The following inference method is provided as an Algorithm in Appendix~\ref{sec:gibbs}.

\subsection{Sampling the transition probability matrix}

For each row of $\bP$, $\bP(p, \cdot) \in \mathbb{R}^m$, we specify a Dirichlet prior $\bP(p, \cdot) \sim \Dir(\balpha)$ where $\balpha = (\alpha_1, \dots, \alpha_m)$ is a length-$m$ vector of positive values $\alpha_i > 0$. There is no conjugate form for the full conditional distribution $\pp(\bP \mid \bx_n, \{\lambda_k, \phi_k, \psi_k\}_{k=1}^m)$ when considering the full pseudo-likelihood in \eqref{eqn:pseudo_likelihood}. However, as either $n \rightarrow \infty$ or $\nu \rightarrow 0$, the posterior conditional distribution $\pp(\bP \mid \bx_n, \{\lambda_k, \phi_k, \psi_k\}_{k=1}^m)$ becomes increasingly dominated by the first term in \eqref{eqn:pseudo_likelihood}. In particular, the conditional distribution of each row $\bP(p, \cdot)$ converges in total variation of the conditional posterior, as either $n \rightarrow \infty$ or $\nu \rightarrow 0$, to a Dirichlet random variable, independent of the spectral parameters, such that
\begin{equation} \label{sec:dirichlet_approx}
  \bP(p, \cdot) \mid \bx_n, \{\lambda_k, \phi_k, \psi_k\}_{k=1}^m \xrightarrow{\text{TV}} \Dir(\balpha + \textbf{c}_p) 
\end{equation} 
where $\textbf{c}_p = (c_p^1, \dots, c_p^m)$ and $c_p^q = \sum_{i=0}^{n-1} \mathbbm{1}\{x_i = p, x_{i+1} = q\}$ for $q \in \{1, \dots, m\}$. We are now faced with a choice. If we are willing to treat \eqref{sec:dirichlet_approx} as a sufficiently accurate approximation, we can directly specify
\begin{equation*}
  \pp(\bP(p, \cdot) \mid \bx_n, \{\lambda_k, \phi_k, \psi_k\}_{k=1}^m) = \pp(\bP(p, \cdot) \mid \bx_n) = \Dir(\balpha + \textbf{c}_p),
\end{equation*}
trading off spectral regularization for computational efficiency. Alternatively, we may use
\begin{equation*}
  q(\bP) = \Dir(\balpha + \textbf{c}_p).
\end{equation*}
as a proposal distribution in a Metropolis-Hastings scheme. In this case, the acceptance probability converges to $1$ as $n \to \infty$, providing a built-in diagnostic for whether the approximation in \eqref{sec:dirichlet_approx} is adequate in practice.

\subsection{Sampling the spectral decomposition of the generator}

Prior specifications for the spectral parameters are complicated by structural requirements of $\bP$ and $\bL$; namely, $\bP$ must be a non-negative matrix with rows that sum to one, and $\bL$ must have non-negative off-diagonal elements with rows that sum to zero. Additionally, under our assumption in \eqref{eqn:L_spectral} that both $\bP$ and $\bL$ are biorthogonal, we require the left and right eigenvectors to satisfy biorthogonality. Enforcing strict orthogonality is possible but computationally challenging \cite[see, for instance,][]{jauch2020random,jauch2021monte}. Instead, we follow \cite{duan2020bayesian} and \cite{matuk2022bayesian}, and relax the strict orthogonality constraint by instead specifying priors that shrink toward the space of orthogonal matrices, i.e., the Stiefel manifold. Recall the eigenvalue of $\bP$ associated with $\lambda_i$, $\Lambda_k = \exp\{\Delta \lambda_k\}$, and factorize $\pp(\{\Lambda_k, \phi_k, \psi_k\}_{k=1}^m) = \pp(\{\Lambda_k\}_{k=1}^m) \cdot \pp(\{\phi_k, \psi_k\}_{k=1}^m)$. We fix $\Lambda_1 = 1$ and $\phi_1 = (1, \dots, 1)^\T$, and assign a uniform prior to the remaining eigenvalues $\{\Lambda_k\}_{k=2}^m$, 
\begin{equation} \label{eqn:value_prior}
  \{\Lambda_k\}_{k=2}^m \sim \mathrm{Uniform}(\mathcal{A}), \quad \mathcal{A} = \{\{\Lambda_k\}_{k=2}^m \mid 1 = \Lambda_1 > \Lambda_2 \geq \cdots \geq \Lambda_m > 0\} \subset \mathbb{R}^{m-1}.
\end{equation}
Up to an additive constant, we specify the eigenvector prior as
\begin{equation} \label{eqn:vector_prior}
  \log \pp(\{\phi_k, \psi_k\}_{k=1}^m) \propto \underbrace{-\sum_{k=2}^m \frac{\norm{\phi_k}^2_2}{2\sigma_\phi^2} -\sum_{k=1}^m \frac{\norm{\psi_k}^2_2}{2\sigma_\psi^2}}_\mathrm{identifiability} - \underbrace{\sum_{j, k =1}^m \frac{(\delta_{jk} - \psi_j^\T \phi_k)^2}{2\sigma_c^2}}_\mathrm{biorthogonality},
\end{equation}
where $\norm{\cdot}_2$ denotes the Euclidean norm. The first bracketed term corresponds to mean-zero multivariate Gaussian priors on the eigenvectors and regularizes their scale, thus preventing identifiability issues that arise from the rescaling invariance between $\lambda_k$ and the $\psi_k$ and $\phi_k$. The second bracketed term shrinks the left and right eigenvectors toward biorthogonality. The biorthogonality constraint, along with the fixed values $\phi_1 = (1, \dots, 1)^\T$ and $\Lambda_1$, implies that the rows of $\bP$ and $\bL$ sum to one and zero, respectively. Neither prior specification in \eqref{eqn:value_prior} or \eqref{eqn:vector_prior} explicitly enforces non-negativity of the off-diagonal entries in $\bL$; instead, this structural constraint is imposed via truncation of the conditional distributions, as described below. This truncation may be interpreted as inducing a uniform measure over the subset of valid diagonalizable generators. Strictly speaking, it is not an explicit prior density defined over the parameter space but is induced implicitly via truncation of the full conditional distributions. 

Let $\Lambda_{-k}=\{\Lambda_j\}_{j\neq k}$. The same notation will be used for $\phi_{-k}$ and $\psi_{-k}$. Due to the form of \eqref{eqn:pseudo_likelihood}, the full conditional distributions of the spectral parameters are conditionally independent of the data $\bx_n$ given $\bP$. For $k \geq 2$, the log of the full conditional distribution of $\Lambda_k$ is given by
\begin{align*}
  \log \pp(\Lambda_k \mid \{\phi_k, \psi_k\}_{k=1}^m, \Lambda_{-k}, \bP) &\propto \frac{-\nu}{2}\lVert{\bP - \sum_{k=1}^m \Lambda_k \phi_k \psi_k^\T}\rVert_\mathrm{F}^2 - \log \pp(\{\Lambda_k\}_{k=1}^m), \\
  &= \frac{-\nu}{2}\lVert{\bY_k - \Lambda_k \phi_k \psi_k^\T}\rVert_\mathrm{F}^2 - \log \pp(\{\Lambda_k\}_{k=1}^m),
\end{align*}
where $\bY_k = \bP - \sum_{j \neq k} \Lambda_j \phi_j \psi_j^\T$ is the residual matrix after subtracting all but the $k$th component. This corresponds to the normal distribution $\mathcal{N}(\mu_{\Lambda_k}, \sigma^2_{\Lambda_k})$ truncated to the interval $\mathcal{A}_\Lambda = (\Lambda_{k+1}, \Lambda_{k-1})$ with parameters
\begin{equation*}
  \mu_{\Lambda_k} = \nu \sigma_{\Lambda_k}^2 \sum_{p, q = 1}^m \phi_k(p) \psi_k(q) \bY_k(p, q), \quad \sigma_{\Lambda_k}^2 = \frac{1}{\nu (\norm{\phi_k}_2^2 \cdot \norm{\psi_k}_2^2)}.
\end{equation*}
To ensure that the generator $\bL = \sum_{k=1}^m \lambda_k \phi_k \psi_k^\T$ with $\lambda_k = \Delta^{-1} \log \Lambda_k$, has non-negative off-diagonal entries, we require $\bL(i,j) = \bL_{-k}(i,j) + \lambda_k \phi_k(i) \psi_k(j) \geq 0$ for all $i \ne j$, where $\bL_{-k} = \sum_{l \neq k} \lambda_l \phi_l \psi_l^\T$. Solving for $\lambda_k$ and exponentiating yields the additional truncation set
\begin{equation*}
  \Lambda_k \in \mathcal{A}_k = \left[ \exp\left\{\underset{\phi_k(p) \psi_k(q) > 0}{\mathrm{max}} \frac{\bL_{-k}(p,q)}{\phi_k(p) \psi_k(q)}\right\}, \exp\left\{\underset{\phi_k(p) \psi_k(q) < 0}{\mathrm{min}} \frac{\bL_{-k}(p,q)}{\phi_k(p) \psi_k(q)} \right\}\right],
\end{equation*}
for all $i \ne j$. Thus, the full conditional distribution is then given by
\begin{equation*}
  \pp(\Lambda_k \mid - ) = \mathcal{N}(\mu_{\Lambda_k}, \sigma^2_{\Lambda_k}) \cdot \one\{\Lambda_k \in \mathcal{A}_\Lambda \cap \mathcal{A}_k\}.
\end{equation*}

The full conditional distributions of the eigenvectors $\phi_k$ and $\psi_k$ are
\begin{align*}
  \log \pp(\phi_k \mid \{\psi_k, \Lambda_k\}_{k=1}^m, \phi_{-k}) &= -\frac{\nu}{2} \lVert{\bY_k - \Lambda_k \phi_k \psi_k^\T}\rVert_\mathrm{F}^2 - \frac{\norm{\psi_k}_2^2}{2\sigma_\psi^2} - \sum_{j=1}^m \frac{(\delta_{jk} - \psi_j^\T \phi_k)^2}{2\sigma_c^2}, \\
  \log \pp(\psi_k \mid \{\phi_k, \Lambda_k\}_{k=1}^m, \psi_{-k}) &= -\frac{\nu}{2} \lVert{\bY_k - \Lambda_k \phi_k \psi_k^\T}\rVert_\mathrm{F}^2 - \frac{\norm{\phi_k}_2^2}{2\sigma_\phi^2} - \sum_{j=1}^m \frac{(\delta_{jk} - \psi_k^\T \phi_j)^2}{2\sigma_c^2}.
\end{align*}
Define $\bphi = (\phi_1, \dots, \phi_m) \in \mathbb{R}^{m \times m}$ and $\bpsi = (\psi_1, \dots, \psi_m) \in \mathbb{R}^{m \times m}$ as the matrix of all $m$ of the eigenvectors, $\mathrm{I}_m$ as the $m \times m$ identity matrix, and $\bdelta_k = (\delta_{1k}, \dots, \delta_{mk})^\T$ as a vector with $k$th value $1$ and $0$ otherwise. The full conditional distribution of $\phi_k$ is thus given by a normal distribution $(\phi_k \mid - ) \sim \mathcal{N}(\mu_{\phi_k}, \Sigma_{\phi_k})$ with parameters
\begin{equation*}
  \mu_{\phi_k} = \Sigma_{\phi_k} \left( \nu \Lambda_k \bY_k \psi_k + \frac{\bpsi \bdelta_k}{\sigma_c^2} \right), \quad \text{and} \quad \Sigma_{\phi_k}^{-1} = \left( \frac{1}{\sigma_{\phi}^2} + \nu \norm{\Lambda_k \psi_k}_2^2\right) \mathrm{I}_m + \frac{\bpsi \bpsi^\T}{\sigma_c^2}.
\end{equation*}
Similarly, the full conditional distribution of $\psi_k$ is given by a normal distribution $(\psi_k \mid - ) \sim \mathcal{N}(\mu_{\psi_k}, \Sigma_{\psi_k})$ with parameters
\begin{equation*}
  \mu_{\psi_k} = \Sigma_{\psi_k} \left( \nu \Lambda_k \bY_k^\T \phi_k + \frac{\bphi \bdelta_k}{\sigma_c^2} \right), \quad \text{and} \quad \Sigma_{\psi_k}^{-1} = \left( \frac{1}{\sigma_{\psi}^2} + \nu \norm{\Lambda_k \phi_k}_2^2\right) \mathrm{I}_m + \frac{\bphi \bphi^\T}{\sigma_c^2}.
\end{equation*}
To maintain non-negativity of the off-diagonal entries of $\bL$, we impose truncation constraints on the components of $\phi_k$ and $\psi_k$. By analogous reasoning to the $\Lambda_k$, we define the element-wise truncation sets as
\begin{align*}
    \phi_k(q) &\in \left[ \underset{\lambda_k \psi_k(q) > 0}{\mathrm{max}} \frac{\bL_{-k}(k,q)}{\lambda_k \psi_k(q)}, \underset{\lambda_k \psi_k(q) < 0}{\mathrm{min}} \frac{\bL_{-k}(k,q)}{\lambda_k \psi_k(q)}\right], \\
    \psi_k(p) &\in \left[ \underset{\lambda_k \phi_k(p) > 0}{\mathrm{max}} \frac{\bL_{-k}(p,k)}{\lambda_k \phi_k(p)}, \underset{\lambda_k \phi_k(p) < 0}{\mathrm{min}} \frac{\bL_{-k}(p, k)}{\lambda_k \phi_k(p)}\right].
\end{align*}
The truncation sets depend on the values of all other eigenvectors and eigenvalues via $\bL_{-k}$ and are not fixed, but state dependent and must be recomputed at every Gibbs step. Sampling may proceed through a multivariate truncated normal or by sequentially sampling each component of $\phi_k$ and $\psi_k$ conditional on the others. For all truncations, the bounds can become narrow; in practice, a small relaxation term can be added to widen the sets and aid mixing. The choice of $\sigma_c^2$ determines the degree of shrinkage toward orthogonality. In practice, we choose $\sigma_c^2 \approx 0$ to obtain near-orthogonality. If exact orthogonality is required, the projection onto the Stiefel manifold is unique, so posterior samples may be projected as per \cite{astfalck2024posterior}.

\section{Theoretical justifications} \label{sec:justification}

In our proposed approach, inference proceeds in two steps: first, the transition probability matrix $\bP$ is estimated from the data (potentially ignoring its dependence on the spectral parameters); second, the spectral parameters are estimated from the resulting estimate of $\bP$. We now provide theoretical results to support this two-stage approach. We begin with a BvM theorem for discretely observed CTMCs, followed by a demonstration of posterior consistency for the spectral parameters. These results extend the existing theory on the consistency of the transition probability matrix in Markov chains. Proofs of all Theorems are included in the Appendix.

\subsection{Bernstein--von Mises Theorem for $\bP$}

Let $\bP \in [0, 1]^{m \times m}$ be the transition probability matrix of a regularly-observed CTMC with true value $\bP_0$, and define the local parameter $h = \sqrt{n}(\bP - \bP_0)$. We make the following two assumptions.
\begin{assumptionp}{(A1)} The true transition probability matrix $\bP_0$ satisfies $\bP_0(p,q) > c/m$ for all $p,q \in \{1, \dots, m\}$, for some constant $0 < c < 1$, and thus $\bP_0$ has full support.
\end{assumptionp}
\begin{assumptionp}{(A2)}
  The prior on $\bP$ is modified as
  \begin{equation*}
    \bP(p, \cdot) \sim \Dir(\balpha) \one\{\bP(p, \cdot) > c/m\}
  \end{equation*}
for some positive $\balpha \in \mathbb{R}_{+}^m$ and small constant $0 < c < 1$.
\end{assumptionp}

Assumption (A1) ensures that $\bP_0$ lies in the interior of $[0, 1]^{m^2}$. Assumptions (A1) and (A2) together guarantee recurrence and geometric ergodicity of the Markov chain, and ensure that $\bP$ is identifiable; further details are provided in the Appendix. We may now state the following result.

\begin{theorem} \label{the:bvm}
  Under Assumptions (A1) and (A2), the posterior distribution $\pp(\bP \mid \bx_n)$ satisfies the Bernstein--von Mises theorem,
  \begin{equation*}
    \mathbb{E}_{\pp(\bx_n \mid \bP_0)}\left\lVert \pp(\bP \mid \bx_n) - \mathcal{N}\left(\hat{\bP}_n, \frac{1}{n} \mathcal{I}^{-1}(\bP_0)\right)\right\rVert_{\mathrm{TV}} \rightarrow 0 \quad \text{as} \quad n \rightarrow \infty,
  \end{equation*}
  where $\hat{\bP}_n$ is the maximum likelihood estimator of $\bP$, $\mathcal{I}(\bP_0)$ is the Fisher information matrix at $\bP_0$, and $\norm{\cdot}_\mathrm{TV}$ denotes the total variation norm.
\end{theorem}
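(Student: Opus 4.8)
The plan is to establish the result through the classical local asymptotic normality (LAN) machinery, adapted to the Markov dependence in the data. The key simplification is that the discrete-time likelihood \eqref{eqn:P_likelihood} factorizes across rows as $\mathcal{L}(\bx_n \mid \bP) = \prod_{p,q} \bP(p,q)^{N_{pq}}$, where $N_{pq} = \sum_{i=1}^n \one\{x_{i-1} = p, x_i = q\}$. Since the rows of $\bP$ are variation-independent, each living on a simplex, I would reparametrize every row by $m-1$ free coordinates on the tangent space of the simplex, so that the local parameter $h = \sqrt{n}(\bP - \bP_0)$ has rows summing to zero and the Fisher information $\mathcal{I}(\bP_0)$ is block diagonal across rows, with $p$-th block $\mu_0(p) G_p$, where $\mu_0$ is the stationary distribution of $\bP_0$ and $G_p$ is the per-visit multinomial information of row $p$. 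This already explains the form of the limiting covariance: the effective sample size for row $p$ is $\approx n\mu_0(p)$, so the $p$-th block of $n^{-1}\mathcal{I}^{-1}(\bP_0)$ scales like $(n\mu_0(p))^{-1}$.

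First I would derive the LAN expansion $\log\{\mathcal{L}(\bx_n \mid \bP_0 + h/\sqrt{n})/\mathcal{L}(\bx_n \mid \bP_0)\} = h^\T \Delta_n - \tfrac12 h^\T \mathcal{I}(\bP_0) h + o_{\pp}(1)$, with normalized score $\Delta_n = n^{-1/2}\nabla \ell_n(\bP_0)$. The quadratic (curvature) term is obtained by writing the second-order Taylor remainder as an additive functional $n^{-1}\sum_i g(x_{i-1}, x_i)$ and invoking the ergodic theorem for Markov chains: Assumptions (A1)--(A2) guarantee irreducibility, aperiodicity, and geometric ergodicity, so $N_p/n \to \mu_0(p)$ and these averages converge to their stationary expectations, which assemble into $\mathcal{I}(\bP_0)$. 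The bound $\bP_0(p,q) > c/m$ from (A1) keeps $\log \bP_0(p,q)$ and its derivatives bounded, controlling the Taylor remainder uniformly over $\norm{h} \le M$.

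The central and most delicate step is the martingale central limit theorem for the score, $\Delta_n \xrightarrow{d} \mathcal{N}(0, \mathcal{I}(\bP_0))$. Writing $Y_i = \nabla \log \bP_0(x_{i-1}, x_i)$ in the reduced parametrization, the Markov property gives $\E[Y_i \mid \mathcal{F}_{i-1}] = 0$, because $Y_i$ is the score of the conditional law of $x_i$ given $x_{i-1}$; hence $\{Y_i\}$ is a martingale-difference sequence with respect to the natural filtration. I would then apply a martingale CLT (e.g. Brown's theorem), verifying that the conditional variances $n^{-1}\sum_i \E[Y_i Y_i^\T \mid \mathcal{F}_{i-1}]$ converge to $\mathcal{I}(\bP_0)$, again by ergodicity, and that a Lindeberg condition holds, which is immediate since the state space is finite and $\bP_0 > c/m$ bounds the increments. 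This is precisely where the Markov dependence must be handled with care and is the main obstacle: the standard i.i.d.\ BvM argument does not apply, and the martingale decomposition is the device that replaces independence.

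Finally I would assemble the pieces. By (A1), $\bP_0$ lies strictly in the interior of the truncated support $\{\bP(p,\cdot) > c/m\}$ of the prior in (A2), so the prior density is continuous and strictly positive in a neighborhood of $\bP_0$ and does not perturb the limiting Gaussian. Combining LAN, the score CLT, and prior positivity with consistency and uniform tightness of the posterior---both of which follow from regularity and identifiability of this finite-dimensional model under (A1)--(A2) via standard testing and consistency arguments---the general Le Cam--van der Vaart argument yields $\norm{\pp(\bP \mid \bx_n) - \mathcal{N}(\hat{\bP}_n, n^{-1}\mathcal{I}^{-1}(\bP_0))}_{\mathrm{TV}} \to 0$ in $\pp_{\bP_0}$-probability. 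Since the total variation distance is bounded by one, the bounded convergence theorem then upgrades convergence in probability to convergence of the expectation $\E_{\pp(\bx_n \mid \bP_0)}$, giving the stated result.
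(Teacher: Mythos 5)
Your overall architecture --- a LAN expansion with ergodic-theorem control of the quadratic term, a martingale-difference CLT for the score, prior positivity at $\bP_0$, then a Le Cam--van der Vaart assembly upgraded to expectation by bounded convergence --- matches the paper's proof in its broad strokes. The paper likewise proves a martingale CLT for the score (its Lemma~\ref{lem:lan_clt}), uniform ergodic control of the log-likelihood and observed information (its Lemma~\ref{lem:uniform_ll}), and treats the prior exactly as you do. The structural difference is that the paper never invokes the classical i.i.d.\ BvM theorem: it routes everything through the weakly-dependent BvM theorem of \cite{connault2014weakly}, verifying that theorem's conditions (C1)--(C7), several of which are concentration inequalities rather than testing hypotheses.

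That difference is where your proposal has a genuine gap. You dismiss posterior tightness and testing as following ``via standard testing and consistency arguments,'' but that is precisely the step whose standard proof breaks under dependence. In van der Vaart's Theorem 10.1, the hypothesis of merely consistent tests is upgraded, inside the proof, to tests with exponentially small errors, and that power-boosting argument uses Hoeffding-type bounds for independent observations; it is this exponential decay that beats the prior mass and controls the posterior probability of $\{\norm{h} > M_n\}$. For Markov data this upgrade is not available off the shelf, and supplying a replacement is the main technical content of the paper's appendix: uniform mixing-time bounds over the compact truncated parameter set (Lemma~\ref{lem:mixing} --- this is where the truncation in (A2) earns its keep), McDiarmid/Hoeffding-type concentration inequalities for Markov chains from \cite{paulin2015concentration} (Lemma~\ref{lem:concentration_inequalities}), an explicit block-based uniformly consistent test (condition (C3)), a score-separation condition (C5), and concentration of the score and of block averages (conditions (C6) and (C7)). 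Note also the uniformity issue your sketch skips: the type-II error must vanish uniformly over all alternatives $\bP$ in the support of the prior, which requires mixing constants bounded uniformly in $\bP$, not merely ergodicity of each fixed chain. Finally, the step you flag as the ``most delicate'' --- the martingale CLT for the score --- is in fact the easy part (the paper's Lemma~\ref{lem:lan_clt} is a few lines); the real work lies in the concentration and testing machinery you assumed away.
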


Theorem~\ref{the:bvm} builds on extensions of BvM theory to weakly dependent data, as developed in \cite{connault2014weakly}. In our setting, the dependence structure arises from the Markov property of the discretely observed CTMC. To establish this result, we verify that the assumptions required by \cite{connault2014weakly} hold under our setup. These include regularity of the likelihood, smoothness of the score function, and concentration inequalities for empirical averages over Markovian blocks. We now state the assumptions required to apply their result in our setting.

\begin{assumptionp}{(C1)}
  For each $n$, the likelihood function $\mathcal{L}(\bx_n \mid \bP_0)$ is dominated by a common $\sigma$-finite measure and $\mathcal{L}(\bx_n \mid \bP_0)$ denotes its density with respect to the common dominating measure.
\end{assumptionp}

\begin{assumptionp}{(C2)}
  The prior $\pp(\bP)$ is absolutely continuous with respect to the Lebesgue
  measure in a neighborhood of $\bP_0 \in \mathbb{R}^{m^2}$ and has a continuous, strictly positive density at $\bP_0$.
\end{assumptionp}

\begin{assumptionp}{(C3)}
  For every $\epsilon > 0$, there is a sequence of uniformly consistent tests $\varphi_n$ such that
  \begin{equation*}
    \mathbb{E}_{\bP_0}[\varphi_n] \rightarrow 0 \quad \text{and} \quad \sup_{\norm{\bP - \bP_0} \geq \epsilon} \mathbb{E}_\bP[1 - \varphi_n] \rightarrow 0
  \end{equation*}
  as $n \rightarrow \infty$.
\end{assumptionp}

\begin{assumptionp}{(C4)}
  There is a sequence of random variables $r_n = n^{-1/2} \nabla_{\bP_0} \log \mathcal{L}(\bx_n \mid \bP)$ and an invertible matrix $
  \mathcal{I}$, such that $r_n$ converges weakly to $\mathcal{N}(0, \mathcal{I})$ under $\bP_0$ such that for every sequence $h_n \rightarrow h$, we have
  \begin{equation*}
    \log \mathcal{L}(\bx_n \mid \bP_0 + n^{-1/2}h_n) - \log \mathcal{L}(\bx_n \mid \bP_0) = h_n^\T r_n - \frac{1}{2} h_n^\T \mathcal{I} h_n + o_{p}(1)
  \end{equation*}
  uniformly in $h_n$ in compact sets.
\end{assumptionp}

\begin{assumptionp}{(C5)}
  Let $s_n = n^{-1/2} r_n$, where $r_n$ is as defined in (C4). There exist $n_0 \in \mathbb{N}$, $\epsilon < 1$, and $c > 0$ such that for any $\norm{\bP - \bP_0} \leq \epsilon$, and $n > n_0$,
  \begin{equation*}
    \norm{\mathbb{E}_\bP(s_n) - \mathbb{E}_{\bP_0}(s_n)} \geq c \norm{\bP - \bP_0}.
  \end{equation*}
\end{assumptionp}

\begin{assumptionp}{(C6)}
  A two-sided McDiarmid-type concentration inequality holds for $s_n$, defined in (C5). That is, there exists $n_0 \in \mathbb{N}$ and $c < \infty$ such that for all $\bP$ and $n > n_0$
  \begin{equation*}
    \pp(\norm{s_n - \mathbb{E}_\bP[s_n]} > u)\leq 2 \exp\left\{\frac{-u^2}{2c/n}\right\}.
  \end{equation*}
\end{assumptionp}

\begin{assumptionp}{(C7)}
  A one-sided Hoeffding-type inequality holds for non-overlapping consecutive blocks of observations. Let $R \in \mathbb{N}$ and define blocks $X_j^R = \{X_i\}_{i = (j-1)R + 1}^{jR}$. For $g_s = g(X_s^R)$, there exists some $n_0 \in \mathbb{N}$ and $c < \infty$ such that for any measurable function $0 \leq g \leq 1$, any $\bP$, and all $n > n_0$,
  \begin{equation*}
    \pp\left(\frac{1}{S}\sum_{s_1}^S g_s < \mathbb{E}\left[\frac{1}{S}\sum_{s_1}^S g_s\right] - u\right) \leq \exp\left\{ \frac{-u^2}{2c/S}\right\}.
  \end{equation*}
\end{assumptionp}

Conditions (C1)--(C4) correspond to standard requirements in BvM theorems for independent and identically distributed data. Conditions (C5)--(C7) were introduced in \cite{connault2014weakly} to extend the classical BvM theorem to weakly dependent data. These assumptions provide control over the score function and block-wise averages through exponential inequalities, allowing the construction of uniformly consistent tests and establishing a local asymptotic normality expansion. In Appendix~\ref{sec:proofs} we provide proofs that Conditions (C1)--(C7) are satisfied by (A1) and (A2) and so Theorem~\ref{the:bvm} is proved as in \cite{connault2014weakly}.

\subsection{Posterior consistency of the spectral parameters}

The BvM theorem established in the preceding section ensures asymptotic normality and local concentration of the posterior distribution of $\bP$. We now focus on the asymptotic behavior of the spectral parameters.

\begin{theorem} \label{the:spectral_consistency}
  Define the spectral decomposition of the true generator $\bL_0$ as $\{\lambda_k^0, \phi_k^0, \psi_k^0\}_{k=1}^m$, and the spectral decomposition of the empirical generator $\hat{\bL} = \Delta^{-1} \log\{\hat{\bP}\}$ as $\{\hat{\lambda}_k, \hat{\phi}_k, \hat{\psi}_k\}_{k=1}^m$.
   For $\bx_n = \{x_0, x_1, \dots, x_n\}$ sampled at a rate $\Delta$, the spectral parameters of the generator satisfy
  \begin{align*}
    \sqrt{n} (\hat{\lambda}_k - \lambda_k^0) &\rightarrow \Delta^{-1} \exp\{- \lambda_k^0 \Delta\} (\psi_k^0)^\T \bQ \phi_k^0 \\
    \sqrt{n} (\hat{\psi}_k - \psi_k^0) &\rightarrow  \bR_k^\T \bQ^\T \psi_k^0   \\
    \sqrt{n} (\hat{\phi}_k - \phi_k^0) &\rightarrow  \bR_k \bQ \phi_k^0 \\
  \end{align*}
  in probability, where $\bQ \sim \mathcal{N}(0, 2 \mathcal{I}^{-1}(\bP_0))$ and $\bR_k = \sum_{j \ne k} \frac{\phi_j^0 (\psi_j^0)^\T}{\Lambda_k^0 - \Lambda_j^0},$ where $\Lambda_k^0 = \exp\{\lambda_k^0 \Delta\}$.
\end{theorem}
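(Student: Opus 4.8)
The plan is to reduce the problem to a first-order perturbation analysis of the eigendecomposition of $\bP$, seeded by the distributional limit supplied by Theorem~\ref{the:bvm}. First I would pin down the driving stochastic object. Writing $\bQ_n = \sqrt{n}(\hat{\bP} - \bP_0)$ for the posterior draw $\hat{\bP}$ (so that $\hat{\bL}$ is the generator induced by a posterior sample of $\bP$), Theorem~\ref{the:bvm} gives that the posterior concentrates as $\mathcal{N}(\hat{\bP}_n, n^{-1}\mathcal{I}^{-1}(\bP_0))$, whence $\sqrt{n}(\hat{\bP} - \hat{\bP}_n) \xrightarrow{d} \mathcal{N}(0, \mathcal{I}^{-1}(\bP_0))$ conditionally on the data, while the classical asymptotics of the maximum likelihood estimator give $\sqrt{n}(\hat{\bP}_n - \bP_0) \xrightarrow{d} \mathcal{N}(0, \mathcal{I}^{-1}(\bP_0))$. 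Since the BvM limit does not depend on the data realization, these two Gaussian contributions are asymptotically independent, so their variances add and $\bQ_n \xrightarrow{d} \bQ \sim \mathcal{N}(0, 2\mathcal{I}^{-1}(\bP_0))$; this is the source of the factor of two in the stated covariance.

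Next I would exploit the fact that $\bP = \exp\{\Delta \bL\}$ and $\bL$ share eigenvectors, with eigenvalues linked by $\Lambda_k = \exp\{\lambda_k \Delta\}$. Consequently the eigenvectors of $\hat{\bL} = \Delta^{-1}\log\hat{\bP}$ coincide with those of $\hat{\bP}$, and $\hat{\lambda}_k = \Delta^{-1}\log\hat{\Lambda}_k$ is a scalar transform of the corresponding eigenvalue of $\hat{\bP}$. This sidesteps any delicacy with the branch of the matrix logarithm and reduces the task to (i) perturbation of the eigenpairs of $\hat{\bP}$ about those of $\bP_0$ and (ii) a scalar delta-method step for the eigenvalues. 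Assuming the eigenvalues of $\bP_0$ are simple—which holds under irreducibility and aperiodicity together with the strict ordering—the eigendecomposition is Fréchet differentiable at $\bP_0$, and standard first-order perturbation theory for the biorthonormal system $(\psi_k^0)^\T\phi_\ell^0 = \delta_{k\ell}$ gives, with $\delta\bP = \hat{\bP} - \bP_0$,
\begin{equation*}
  \hat{\Lambda}_k - \Lambda_k^0 = (\psi_k^0)^\T \delta\bP\, \phi_k^0 + O(\norm{\delta\bP}^2), \quad \hat{\phi}_k - \phi_k^0 = \bR_k\, \delta\bP\, \phi_k^0 + O(\norm{\delta\bP}^2),
\end{equation*}
and, applying the right-eigenvector formula to $\bP^\T$, $\hat{\psi}_k - \psi_k^0 = \bR_k^\T (\delta\bP)^\T \psi_k^0 + O(\norm{\delta\bP}^2)$, with $\bR_k = \sum_{j\ne k}\phi_j^0(\psi_j^0)^\T/(\Lambda_k^0 - \Lambda_j^0)$.

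I would then scale by $\sqrt{n}$ and pass to the limit. Since $\sqrt{n}\,\delta\bP = \bQ_n = O_p(1)$, every quadratic remainder is $\sqrt{n}\cdot O_p(n^{-1}) = o_p(1)$, so Slutsky's theorem and the continuous mapping theorem let me replace $\sqrt{n}\,\delta\bP$ by its weak limit $\bQ$ inside the linear leading terms. This yields $\sqrt{n}(\hat{\phi}_k - \phi_k^0) \xrightarrow{d} \bR_k \bQ \phi_k^0$, $\sqrt{n}(\hat{\psi}_k - \psi_k^0) \xrightarrow{d} \bR_k^\T \bQ^\T \psi_k^0$, and $\sqrt{n}(\hat{\Lambda}_k - \Lambda_k^0) \xrightarrow{d} (\psi_k^0)^\T \bQ \phi_k^0$. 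For the generator eigenvalue I would apply the delta method to $\Lambda \mapsto \Delta^{-1}\log\Lambda$, whose derivative at $\Lambda_k^0$ is $\Delta^{-1}(\Lambda_k^0)^{-1} = \Delta^{-1}\exp\{-\lambda_k^0\Delta\}$, delivering the stated limit for $\sqrt{n}(\hat{\lambda}_k - \lambda_k^0)$.

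The main obstacle I anticipate is the first step rather than the perturbation algebra: justifying the factor-of-two covariance requires carefully combining the posterior fluctuation from Theorem~\ref{the:bvm} with the frequentist sampling fluctuation of the MLE and arguing their asymptotic independence, so that the two $\mathcal{I}^{-1}$ blocks add. A secondary difficulty is uniform control of the perturbation remainders: the constants in $\bR_k$ scale like the inverse spectral gaps $(\Lambda_k^0 - \Lambda_j^0)^{-1}$, so the simple-eigenvalue assumption is essential, and one must verify that $\hat{\bP}$ lies in the region where the eigenprojections are analytic with probability tending to one, which follows from $\hat{\bP} \to \bP_0$ together with the separation of the limiting spectrum.
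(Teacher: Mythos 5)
Your proposal is correct and follows essentially the same route as the paper's proof: the BvM theorem supplies the Gaussian perturbation $\bQ \sim \mathcal{N}(0, 2\mathcal{I}^{-1}(\bP_0))$, first-order perturbation theory for the biorthogonal eigensystem of $\bP_0$ (with the resolvent $\bR_k$) gives the eigenvector and eigenvalue corrections, and a scalar Taylor/delta-method step applied to $\Lambda \mapsto \Delta^{-1}\log\Lambda$ yields the limit for $\hat{\lambda}_k$. You are in fact more explicit than the paper at two points it glosses over: the origin of the factor of two (posterior spread about the MLE plus the MLE's sampling fluctuation about $\bP_0$, asymptotically independent) and the requirement that the eigenvalues of $\bP_0$ be simple and well separated so the eigenprojections are differentiable in a neighborhood containing $\hat{\bP}$ with probability tending to one.
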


Theorem~2 characterizes how uncertainty in $\bP$ propagates to uncertainty in the spectral parameters $\{\lambda_k, \psi_k, \phi_k\}_{k=1}^m$. Notably, Theorem~2 establishes that the posterior distributions for the spectral parameters converge at the standard parametric rate of $\sqrt{n}$ and explicitly capture how factors such as sampling frequency and the eigenvalues affect convergence. The convergence of left and right eigenvectors is governed by $\bR_k$. When the eigenvalues are close, or concentrated together, $\bR_k$ increases, thus increasing the uncertainty in the eigenvector estimation. The intuitive rationale here is that when eigenvalues are clustered closely together, the corresponding eigenvectors are sensitive to nearby eigenmodes. From a theoretical perspective, Theorem~2 extends the asymptotic normality results on $\bP$ in Theorem~1 to the derived spectral parameters.

\section{Applications} \label{sec:applications}

We now demonstrate our approach through two studies. First, we compare to existing methodologies across repeated simulation; namely, \cite{bladt2005statistical} and \cite{riva2023estimation}. Second, we scale our method to an application of a metastable diffusion process.

\subsection{Simulation of Random Generators}

We begin with a simulation study, considering CTMCs with observation lengths $n \in \{10^2, 10^3, 10^4, 10^5\}$ and state space dimension $m \in \{2, 4, 8\}$. We refer to our method as \textsc{bigmac} (Biorthogonal Inference for the Generator Matrix of A CTMC), the method of \cite{bladt2005statistical} as \textsc{blandt}, and the method of \cite{riva2023estimation} as \textsc{mh-riva}. For each combination of $n$ and $m$, we perform 100 simulations using a randomly sampled generator matrix $\bL$, whose off-diagonal elements are drawn as
\begin{equation*}
  \bL(p, q) = \bL(q, p) \sim \frac{2}{|p - q|^2} \mathcal{U}(0, 1) \quad \text{for all} \quad p < q,
\end{equation*}
where $\mathcal{U}(0, 1)$ denotes the uniform distribution on $[0,1]$. The diagonal entries are set as $\bL(p,p) = - \sum_{q \neq p} \bL(p,q)$, ensuring each row sums to zero. This construction yields symmetric generator matrices with a loosely banded structure.

For \textsc{bigmac}, we set the prior hyperparameters to $\nu = 10^4$, $\sigma_\phi^2 = 10^{-1}$, $\sigma_\psi^2 = 10^{-1}$, $\sigma_c^2 = 10^{-5}$, and $\balpha = (1, \dots, 1)$. 
The parameters $\sigma_\phi^2$ and $\sigma_\psi^2$ aid identifiability of the spectral components, and the product $\nu \sigma_c^2$ governs the strength of shrinkage toward the biorthogonal structure. When $\nu \sigma_c^2 = 1$, the model gives equal influence to representing the transition matrix $\bP$ (regardless of biorthogonality) and enforcement of biorthogonality; values less than one, as used here, favor biorthogonal solutions. We implement \textsc{blandt} using the \texttt{ctmcd} package \citep{pfeuffer2017ctmcd}, which augments \citet{bladt2005statistical} with the latent-path sampling speed-up of \citet{fearnhead2006exact}. We adopt the default uninformative priors provided by the package. Finally, we implement \textsc{mh-riva}, the Metropolis--Hastings sampler of \citet{riva2023estimation}. Here, $\bL$ is parameterized as $\bL = -\bA + \bA \bQ$, where $\bA$ is a positive diagonal matrix with entries $a_{ii} > 0$, and $\bQ$ is a probability matrix with a zero-valued diagonal (note, \emph{a} probability matrix, not \emph{the} transition matrix $\bP$). Proposals for $\bA$ are made element-wise as $\log a^{\mathrm{proposed}}_{ii} \sim \mathcal{N}(\log a^{\mathrm{current}}_{ii}, \sigma_a^2)$. For $\bQ$, we define $\bQ_{j, -j}$ as the $j$th row excluding the diagonal entry, and generate proposals as $\bQ_{j, -j}^{\mathrm{proposed}} \sim \mathrm{Dir}(c \bQ_{j, -j}^{\mathrm{current}} + 1)$. We set $\sigma_a^2 = 3n^{-1/2}$ and $c = n^{1/2}$, which we find provides a good exploration--exploitation balance that scales with $n$ as the posterior contracts.

\begin{figure}[b!]
  \centering
  \includegraphics[width=0.6\textwidth]{"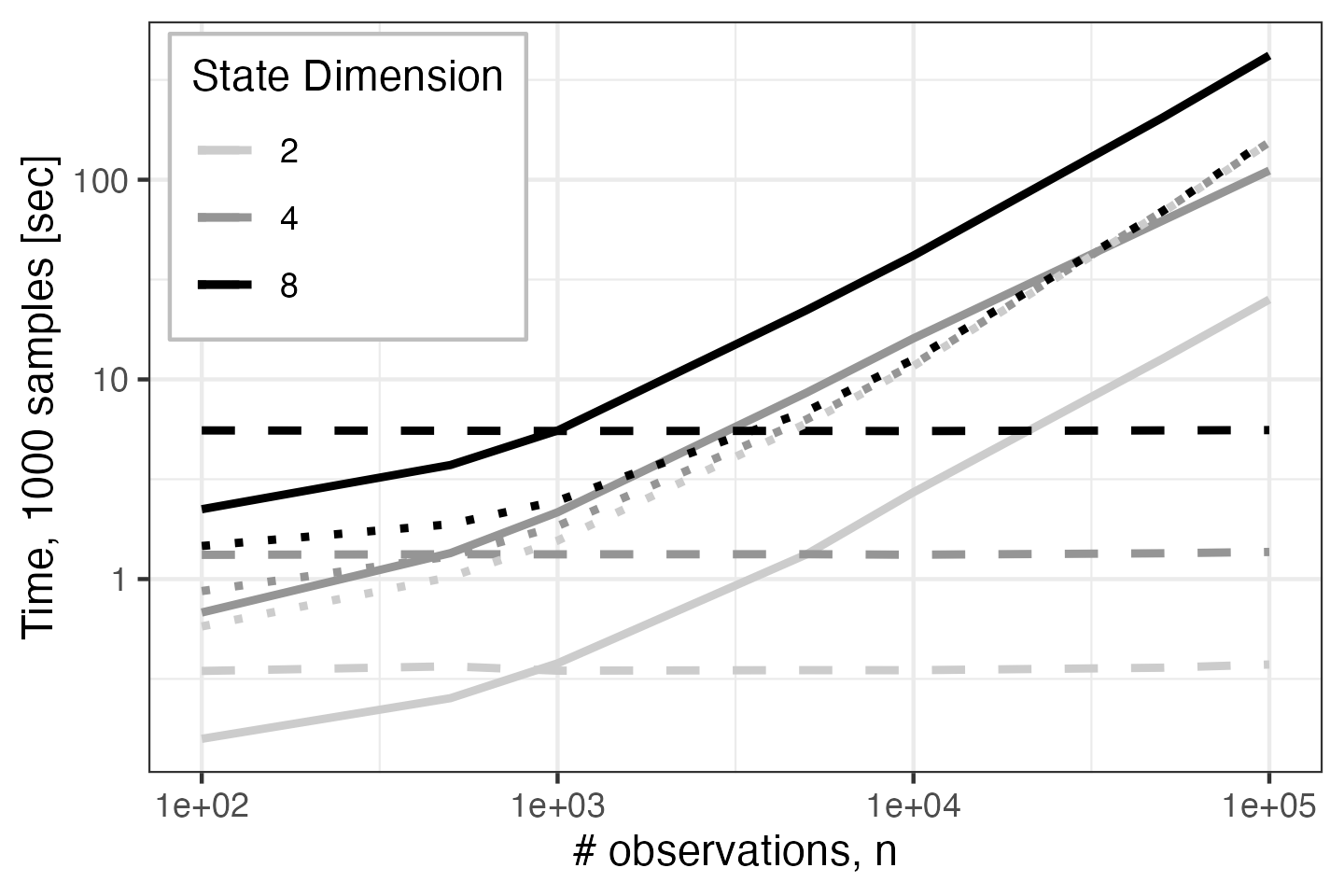"}
  \caption{Run-times per 1000 MCMC samples. The model \textsc{blandt} is shown with solid lines (\textbf{---}), \textsc{bigmac} with dashed lines (\textbf{- - -}), and \textsc{mh-riva} with dotted lines ($\bm{\cdot \cdot \cdot}$).}
  \label{fig:run_times}
\end{figure}

First, we plot in Figure~\ref{fig:run_times} the run-times for each of the three models over $n$; line shading denotes the state space dimension $m$. The models are distinguished by line type: solid lines (\textbf{---}) for \textsc{blandt}, dashed lines (\textbf{- - -}) for \textsc{bigmac}, and dotted lines ($\bm{\cdot \cdot \cdot}$) for \textsc{mh-riva}. All models are run on an Apple M2 Pro chip with 32GB RAM; timings should be interpreted as relative, since absolute run-times may vary across systems. Both \textsc{blandt} and \textsc{mh-riva} scale computationally with increasing $n$. For \textsc{blandt}, this is due to latent path sampling, while for \textsc{mh-riva}, the increase stems from repeated likelihood evaluations. In contrast, \textsc{bigmac} only requires a single computation of a transition count matrix. With increasing state space dimension $m$, both \textsc{bigmac} and \textsc{blandt} incur higher computational cost. Interestingly, \textsc{mh-riva} remains computationally consistent in $m$, as proposals are fast and the cost of likelihood evaluation scales with $n$ rather than $m$. Despite this, we observe severe performance issues for effective sample size (ESS). Table~\ref{tab:ess} reports the average ESS per 100 samples, where an aggregated ESS is computed as the entrywise mean ESS across $\bL$, for $n = 10^3$ and $n = 10^5$, and for each $m$. Standard deviations across simulations are shown in parentheses. In general, ESS decreases with increasing $m$, as expected due to the growing parameter space. The models \textsc{bigmac} and \textsc{blandt} exhibit similar mixing properties, making their run-times in Figure~\ref{fig:run_times} directly comparable. By contrast, \textsc{mh-riva} suffers from markedly lower ESS and deteriorates as both $n$ and $m$ increase. While improved mixing may be achievable through more refined proposal tuning, this would require considerable manual effort on a case-by-case basis.

\begin{table}[ht]
  \centering
  \caption{Average effective sample size per 100 samples, computed as the entrywise mean effective sample size across the elements of $\bL$.}
  \begin{tabular}{ccccccc}
  \toprule
  & \multicolumn{3}{c}{$n = 10^3$} & \multicolumn{3}{c}{$n = 10^5$} \\
   & $m = 2$ & $m = 4$ & $m = 8$ & $m = 2$ & $m = 4$ & $m = 8$ \\
  \midrule
  \textsc{bigmac} & 72 (30) & 22 (11) & 8 (2) & 26 (23) & 8.7 (4.8) & 6.5 (1.9) \\
  \textsc{blandt} & 56 (47) & 12 (8) & 8 (2) & 52 (36) & 16 (8.8) & 3.7 (1.3) \\
  \textsc{mh-riva} & 7.4 (2.6) & 1.5 (0.37) & 0.54 (1.4) & 8.0 (2.9) & 0.17 (0.42) & 0 (0) \\
  \bottomrule
  \end{tabular}
  \label{tab:ess}
\end{table}

Finally, we examine the goodness-of-fit of the models. As noted above, \textsc{mh-riva} does not yield a sufficient number of effectively independent samples, and is therefore excluded from this comparison. Table~\ref{tab:rmse_fits} reports the average Frobenius norm between the posterior means and the true $\bL$, averaged across all simulations. As in Table~\ref{tab:ess}, bracketed values denote empirical standard deviations. Since \textsc{blandt} targets the exact posterior, we treat its results as a proxy for ground truth. Our model \textsc{bigmac} exhibits comparable performance across all values of $m$ and $n$, empirically suggesting that the pseudo-likelihood underpinning our approach is adequate. Combined with the computational results above, this supports our claim that we achieve comparable accuracy with significantly improved efficiency.

\begin{table}[ht]
  \centering
  \caption{Average Frobenius norms between the posterior means and the true $\bL$, computed across all simulations. Bracketed values denote empirical standard deviations.}
  \begin{tabular}{cccccc}
  \toprule
  & & $n = 10^2$ & $n = 10^3$ & $n = 10^4$ & $n = 10^5$ \\
  \midrule
  \multirow{2}{*}{$m=2$} & \textsc{bigmac} & 0.54 (0.82) & 0.080 (0.073) & 0.026 (0.019) & 0.012 (0.0085) \\
   & \textsc{blandt} & 0.27 (0.21) & 0.085 (0.077) & 0.027 (0.021) & 0.0094 (0.0083) \\
   \midrule
   \multirow{2}{*}{$m=4$} & \textsc{bigmac} & 0.99 (0.47) & 0.38 (0.22) & 0.14 (0.064) & 0.080 (0.032) \\
   & \textsc{blandt} & 1.0 (0.39) & 0.47 (0.21) & 0.16 (0.79) & 0.050 (0.029) \\
   \midrule
   \multirow{2}{*}{$m=8$} & \textsc{bigmac} & 2.0 (0.92) & 0.71 (0.18) & 0.43 (0.11) & 0.35 (0.10) \\
   & \textsc{blandt} & 2.2 (0.49) & 1.4 (0.33) & 0.62 (0.19) & 0.25 (0.086) \\
  \bottomrule
  \end{tabular}
  \label{tab:rmse_fits}
\end{table}

\subsection{Metastable Diffusion Process}

We next consider a higher-dimensional example exhibiting metastable behavior, based on a continuous-time diffusion model. The Smoluchowski equation defines a class of irreducible Markov processes governed by the stochastic differential equation
\begin{equation} \label{eqn:smoluchowski}
\rd x_t = -\frac{1}{\alpha} \nabla \mathrm{U}(x_t) \rd t + \frac{\beta}{\alpha} \rd w_t,
\end{equation}
where $\alpha$ and $\beta$ are known parameters, and $\{w_t: t \ge 0\}$ denotes standard Brownian motion \citep[see][]{huisinga2004phase}. Following the parameterization in \cite{deng2012model}, we fix $\alpha = 1.5$ and $\beta = 1.65$, and define the potential $\mathrm{U} : \mathbb{R} \to \mathbb{R}^+$ as the sixth-degree polynomial
\begin{equation*}
\mathrm{U}(x) = \frac{1}{200} \left( 0.5x^6 - 15x^4 + 119x^2 + 28x + 50 \right),
\end{equation*}
which forms a three-well potential, shown in the left panel of Figure~\ref{fig:diffusion}. This choice of $\mathrm{U}(x)$ yields a metastable stochastic process, in which $x_t$ remains trapped for extended periods within one of three dominant basins of attraction, punctuated by rare and rapid transitions between them \citep[see][for a rigorous formulation]{huisinga2004phase}. We simulate a single trajectory from this process using an Euler--Maruyama scheme with step size $0.02$ over $5 \times 10^5$ time steps. Observations are obtained by subsampling every 50th point, yielding $10{,}000$ discrete-time observations of the continuous process. To perform inference, we discretize the continuous state space $[-5, 5]$ into $m = 30$ equal-width intervals and project the simulated trajectory onto this discrete partition. The resulting sequence of interval indices forms a discrete-state approximation to the original diffusion. Under standard regularity conditions, this projection induces a CTMC over the discretized state space, with a generator matrix that approximates the coarse-grained dynamics of the latent diffusion. Coarse-graining procedures of this type are widely used in statistical physics and molecular dynamics to approximate high-dimensional diffusions by finite-state Markov models \citep[see][]{saunders2013coarse,castiglione2008chaos}. The right panel of Figure~\ref{fig:diffusion} shows the empirical transition matrix $\hat{\bP}$ obtained from the subsampled trajectory. For reference, the center panel displays a higher-fidelity approximation of the transition matrix, computed analogously using a long-run simulation of $5 \times 10^7$ time steps.

\begin{figure}[h!]
  \centering
  \includegraphics[width=\textwidth]{"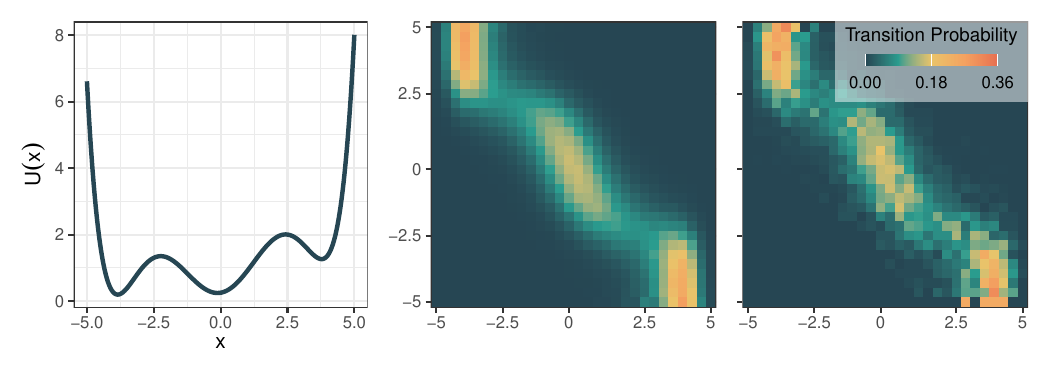"}
  \caption{Left, the three-well potential $\mathrm{U}(x)$ governing the dynamics in the Smoluchowski equation. Center, reference transition matrix estimated from a long simulation of $5 \times 10^7$ time steps. Right, empirical transition matrix $\hat{\bP}$ estimated from a subsampled trajectory of $10{,}000$ observations.}
  \label{fig:diffusion}
\end{figure}

\begin{figure}[t!]
  \centering
  \includegraphics[width=\textwidth]{"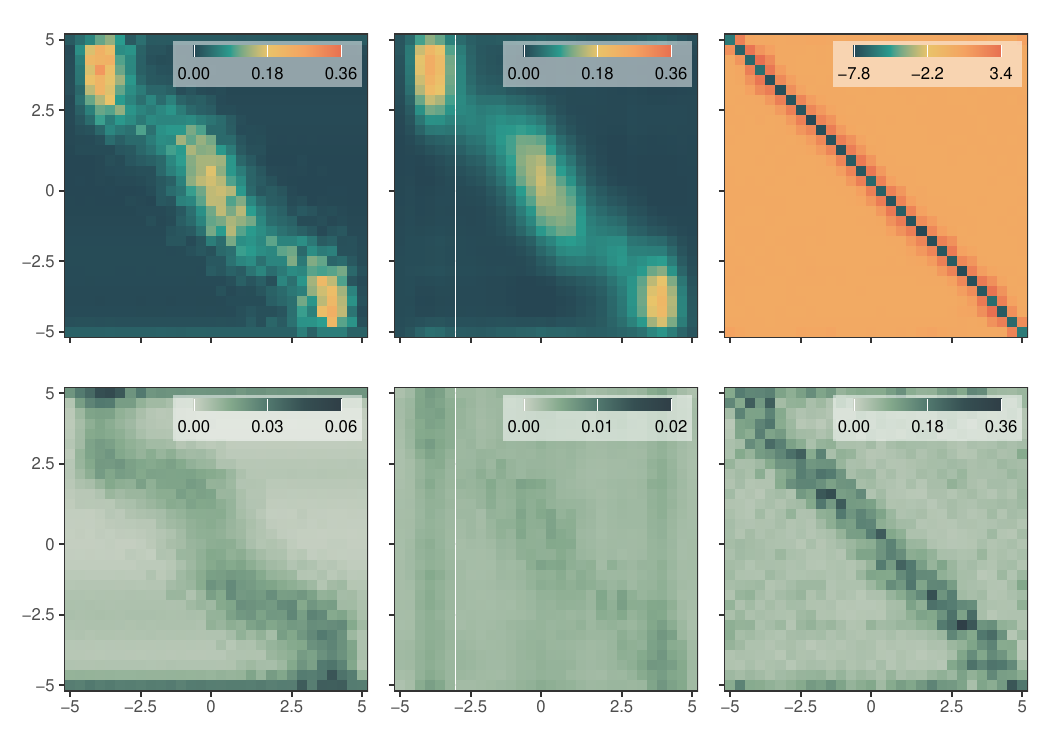"}
  \caption{
  Top row, posterior means of the transition matrix $\bP$ (left), its spectral reconstruction $\tilde{\bP}$ (center), and the generator $\bL$ (right). Bottom row, corresponding posterior standard deviations.
  }
  \label{fig:diffusion_fits}
\end{figure}

\begin{figure}[h!]
  \centering
  \includegraphics[width=\textwidth]{"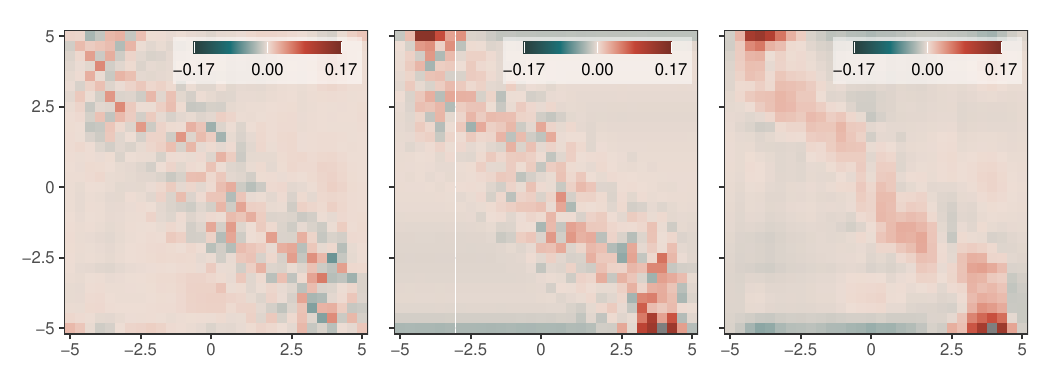"}
  \caption{Left, expected difference $\E[\bP - \tilde{\bP} \mid \bx_n]$. Center, difference between $\E[\bP \mid \bx_n]$ and the reference matrix. Right, difference between $\E[\tilde{\bP} \mid \bx_n]$ and the reference matrix.}
  \label{fig:difference}
\end{figure}

This combination of dimensionality and observation length is challenging for \textsc{blandt} and \textsc{mh-riva} to model. In contrast, we demonstrate that our model scales effectively and accurately recovers the dynamics of the diffusion process. As in the previous case study, we set the prior hyperparameters to $\nu = 10^4$, $\sigma_\phi^2 = 10^{-1}$, $\sigma_\psi^2 = 10^{-1}$, $\sigma_c^2 = 10^{-5}$, and $\balpha = (1, \dots, 1)$. Summaries of the model fit are shown in Figure~\ref{fig:diffusion_fits}. The top row displays the posterior means, and the bottom row the posterior standard deviations, for the transition probability matrix $\bP$ (left), the reconstructed transition probability matrix $\tilde{\bP} = \sum_k \exp\{\lambda_k\} \phi_k \psi_k^\T$ (center), and the generator matrix $\bL$ (right). The posterior mean of $\bP$ closely matches the reference matrix from the long-run simulation, with most mass concentrated along near-diagonal transitions between neighboring bins. As we place a weak prior on each row $\bP(i, \cdot)$, the posterior mean is nearly identical to the empirical plug-in estimator $\hat{\bP}$ shown earlier in Figure~\ref{fig:diffusion}. The reconstructed matrix $\tilde{\bP}$ captures similar structure, but exhibits smoother variation between adjacent entries, reflecting the regularizing influence of the biorthogonality constraint. To assess structural differences, Figure~\ref{fig:difference} shows the expected difference $\E[\bP - \tilde{\bP} \mid \bx_n]$ (left), the deviation between the reference matrix and $\E[\bP \mid \bx_n]$ (center), and the deviation between the reference matrix and $\E[\tilde{\bP} \mid \bx_n]$ (right). At first glance, the right-hand panel suggests structural bias in $\tilde{\bP}$. However, as seen in the left panel, this difference aligns closely with variations already present in $\bP$, indicating that $\tilde{\bP}$ provides a smoothed and plausible reconstruction rather than introducing systematic error. Posterior standard deviations (bottom row of Figure~\ref{fig:diffusion_fits}) are largest in the transition regions at basin boundaries, where transitions are rare but not negligible. Notably, uncertainty in $\tilde{\bP}$ is smaller overall (note the different legend scale), again reflecting regularization through the spectral approximation. The final column in Figure~\ref{fig:diffusion_fits} shows the posterior summaries for the generator matrix $\bL$. The posterior mean exhibits dominant diagonal structure and sparse off-diagonal mass. Posterior uncertainty is concentrated along the diagonal, while off-diagonal transition regions, where rates are near zero, exhibit greater certainty.

\section{Conclusions} \label{sec:conclusion}

We present a pseudo-Bayesian framework for inference on CTMCs observed at discrete time points, addressing the challenge posed by the intractability of the true likelihood. By constructing a pseudo-likelihood over both the transition matrix $\bP$ and its spectral generator $\bL$, we offer a principled solution to the embedding problem that preserves probabilistic structure while enabling scalable inference. The resulting Gibbs sampler respects the constraints required for embeddability and yields coherent posterior uncertainty. Our method is straightforward to implement, computationally efficient, and theoretically grounded. We prove a BvM theorem for the transition matrix and establish posterior consistency for the spectral decomposition of $\bL$. Through empirical studies on synthetic and diffusion-based systems, we demonstrate the method's capacity to recover and quantify uncertainty in the dynamics. Our approach relies on two main assumptions: regular observation times and a biorthogonal generator $\bL$. Future work could explore relaxing one or both of these assumptions.

\section*{Acknowledgements}

Lachlan Astfalck was supported by the ARC TIDE Industrial Transformation Research Hub (Grant No. IH200100009). David Dunson was partially supported by the Office of Naval Research (N00014-24-1-2626) and the National Science Foundation (NSF IIS-2426762).

\section*{Code and data availability}

Code is available at \texttt{astfalckl.github.io/bigmac} as an R package. Tutorials are provided to recreate the simulation studies and code is given to generate the data used herein.

\bibliographystyle{biometrika}
\bibliography{references}

\appendix

\newpage

\spacingset{1.1}

\section{Gibbs Sampler Pseudo-code} \label{sec:gibbs}

\begin{algorithm}[ht]
  \caption{Gibbs sampler for pseudo-posterior inference of CTMCs}
  \begin{algorithmic}[1]
  \State \textbf{Input:} Data $\bx_n$, hyperparameters $\balpha$, $\nu$, $\Delta$, $\sigma_\phi^2$, $\sigma_\psi^2$, $\sigma_c^2$
  \State \textbf{Initialize:} $\bP^{(0)}$, $\{\Lambda_k^{(0)}, \phi_k^{(0)}, \psi_k^{(0)}\}_{k=1}^m$; fix $\phi_1 = \mathbf{1}$ and $\Lambda_1 = 1$. Compute counts $\mathbf{c}_p = (c_p^1, \dots, c_p^m)$ from transitions in $\bx_n$
  
  \For{iteration $t = 1$ to $N$}
    \For{each row $p = 1, \dots, m$}
      \State Sample $\bP^{(t)}(p, \cdot) \sim \Dir(\balpha + \mathbf{c}_p)$
      \Comment{Or use as proposal in MH if desired}
    \EndFor
  
    \For{each $k = 2, \dots, m$}
      \State \textbf{Update eigenvalue} $\Lambda_k$:
      \State $\bY_k \gets \bP^{(t)} - \sum_{j \ne k} \Lambda_j^{(t-1)} \phi_j^{(t-1)} (\psi_j^{(t-1)})^{\T}$
      \State Compute $\mu_{\Lambda_k}$ and $\sigma_{\Lambda_k}^2$
      \State Compute truncation intervals $\mathcal{A}_\Lambda$ (ordering) and $\mathcal{A}_k$ (positivity)
      \State Sample $\Lambda_k^{(t)} \sim \mathcal{N}(\mu_{\Lambda_k}, \sigma^2_{\Lambda_k}) \cdot \mathbbm{1}\{\Lambda_k \in \mathcal{A}_\Lambda \cap \mathcal{A}_k\}$
  
      \State \textbf{Update right eigenvector} $\phi_k$:
      \State Recompute $\bY_k$ using $\Lambda_k^{(t)}$
      \State Compute $\Sigma_{\phi_k}^{-1} = \left(\frac{1}{\sigma_\phi^2} + \nu \|\Lambda_k \psi_k\|^2 \right) \bI + \frac{\bpsi \bpsi^\T}{\sigma_c^2}$
      \State Compute $\mu_{\phi_k} = \Sigma_{\phi_k} \left(\nu \Lambda_k \bY_k \psi_k + \frac{\bpsi \bdelta_k}{\sigma_c^2} \right)$
      \State Compute elementwise truncation bounds $\mathcal{B}_k(j)$
      \State Sample $\phi_k^{(t)}(j) \sim \mathcal{N}(\mu_{\phi_k}, \Sigma_{\phi_k}) \cdot \mathds{1}\{\phi_k(j) \in \mathcal{B}_k(j)\}$
  \EndFor
  \For{each $k = 1, \dots, m$}
      \State \textbf{Update left eigenvector} $\psi_k$:
      \State Compute $\Sigma_{\psi_k}^{-1} = \left(\frac{1}{\sigma_\psi^2} + \nu \|\Lambda_k \phi_k\|^2 \right) \bI + \frac{\bphi \bphi^\T}{\sigma_c^2}$
      \State Compute $\mu_{\psi_k} = \Sigma_{\psi_k} \left(\nu \Lambda_k \bY_k^\T \phi_k + \frac{\bphi \bdelta_k}{\sigma_c^2} \right)$
      \State Compute elementwise truncation bounds $\mathcal{C}_k(j)$
      \State Sample $\psi_k^{(t)}(j) \sim \mathcal{N}(\mu_{\psi_k}, \Sigma_{\psi_k}) \cdot \mathds{1}\{\psi_k(j) \in \mathcal{C}_k(j)\}$
    \EndFor
  \EndFor
  
  \State \textbf{Output:} Posterior samples $\{\bP^{(t)}, \Lambda_k^{(t)}, \phi_k^{(t)}, \psi_k^{(t)}\}_{t=1}^N$
  \end{algorithmic}
  \end{algorithm}

  \newpage

\section{Technical Proofs} \label{sec:proofs}

\subsection{Supporting Lemmas}

To establish conditions (C1)--(C7) we first establish a series of lemmas that will assist.

\begin{lemma} \label{lem:concentration_inequalities}
  Let $\{X_i\}$ be a discrete-time Markov chain, and let $\tau(\epsilon)$ be the $\epsilon$-mixing time. Let $\tau_\mathrm{min} = \inf_{0 \leq \epsilon < 1} \tau(\epsilon)(\frac{2-\epsilon}{1-\epsilon})^2$. Let $f$ be a function of $n$ arguments with bounded differences
  \begin{equation*}
    f(X_1, \dots, X_n) - f(X_1', \dots, X_n') \leq \sum_{i=1}^n c_i \one \{X_i \ne X_i'\}.
  \end{equation*}
  Then, the following one-sided and two-sided concentration inequalities hold,
  \begin{align*}
    P(f \leq \E[f] - u) &\leq \exp\left\{\frac{-u^2}{2 \tau_\mathrm{min} \sum_{i=1}^n c_i^2}\right\} \\
    P(f \geq \E[f] + u) &\leq \exp\left\{\frac{-u^2}{2 \tau_\mathrm{min} \sum_{i=1}^n c_i^2}\right\}  \\
    P(\norm{f - \E[f]} \geq u) &\leq 2\exp\left\{\frac{-u^2}{2 \tau_\mathrm{min} \sum_{i=1}^n c_i^2}.\right\} 
  \end{align*}
\end{lemma}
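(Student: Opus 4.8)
The three displayed inequalities constitute a bounded–differences (McDiarmid–type) concentration statement for the dependent sequence $\{X_i\}$, so my plan is to prove them via the martingale--coupling method for Markov chains and then to convert the chain's coupling coefficients into the mixing-time constant $\tau_{\min}$. The first reduction is cosmetic: the two one-sided inequalities are interchanged by the map $f \mapsto -f$ (which preserves the bounded-difference constants $c_i$ and sends $\E[f]$ to $-\E[f]$), and the two-sided inequality follows from the two one-sided bounds and a union bound, which accounts exactly for the leading factor $2$. It therefore suffices to establish the lower tail $P(f \le \E[f] - u) \le \exp\{-u^2/(2\tau_{\min}\sum_i c_i^2)\}$.

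For the one-sided bound I would introduce the Doob martingale $M_k = \E[f \mid X_1,\dots,X_k]$, so that $f - \E[f] = \sum_k (M_k - M_{k-1})$ decomposes into martingale differences $D_k$. The standard coupling estimate bounds the conditional oscillation of $D_k$ by $\mathrm{osc}(D_k) \le c_k + \sum_{j>k} c_j \,\bar\eta_{kj}$, where $\bar\eta_{kj}$ is the $\eta$-mixing (maximal coupling) coefficient of the chain between positions $k$ and $j$; this follows from coupling the conditional laws of $(X_{k+1},\dots,X_n)$ started from two values of $X_k$ and using that altering coordinate $j$ changes $f$ by at most $c_j$. Because the process is Markov, $\bar\eta_{kj}$ is controlled by the kernel's contraction at lag $\ell = j-k$ and, by definition of the $\epsilon$-mixing time, decays geometrically: after every $\tau(\epsilon)$ steps the coefficient is multiplied by at most $\epsilon$. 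An Azuma--Hoeffding step then yields a sub-Gaussian tail with variance proxy proportional to $\sum_k \mathrm{osc}(D_k)^2$.

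The decisive step is to extract the correct, \emph{linear}, dependence on $\tau(\epsilon)$. A direct bound $\mathrm{osc}(D_k) \le c\,(1 + \sum_{\ell \ge 1}\bar\eta(\ell))$ followed by squaring and summing produces a spurious factor $\tau^2$. To avoid this I would regroup the indices into consecutive blocks of length $R = \tau(\epsilon)$ and run the martingale at the block level, treating $f$ as a function of the block variables with block constants $\tilde c_s = \sum_{i \in B_s} c_i$. The geometric decay then acts between blocks: blocks at separation $d$ are at chain-lag at least $(d-1)\tau(\epsilon)$, so the block-level mixing coefficients obey $\bar\eta^{\mathrm{block}}(d) \le \epsilon^{\,d-1}$, and summing the geometric series gives the per-block oscillation factor $1 + \sum_{d\ge1}\epsilon^{d-1} = \tfrac{2-\epsilon}{1-\epsilon}$. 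Applying Cauchy--Schwarz within each block, $\tilde c_s^2 \le R \sum_{i\in B_s} c_i^2$, converts $\sum_s \tilde c_s^2$ into $R\sum_i c_i^2 = \tau(\epsilon)\sum_i c_i^2$, so the variance proxy for a fixed $\epsilon$ is $\tau(\epsilon)\big(\tfrac{2-\epsilon}{1-\epsilon}\big)^2 \sum_i c_i^2$. Taking the infimum over $0 \le \epsilon < 1$ replaces $\tau(\epsilon)\big(\tfrac{2-\epsilon}{1-\epsilon}\big)^2$ by $\tau_{\min}$ and gives the claimed constant. Equivalently, one may invoke the McDiarmid inequality for Markov chains obtained through Marton's transportation/coupling method and simply verify that its mixing-time constant specializes to $\tau_{\min}$.

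The main obstacle is exactly this linear-in-$\tau$ sharpening: the elementary martingale summation is off by a factor of the mixing time, and recovering $\tau_{\min}$ requires either the blocking/Cauchy--Schwarz refinement above or the operator-norm control of the coupling matrix in Marton's method. A secondary technical point is that the chain here is the discretely observed CTMC, which need not be started from stationarity; I would therefore work throughout with the coupling characterization of $\bar\eta_{kj}$, which is valid for an arbitrary initial law $\pi_0$, so that the concentration bound holds uniformly in the starting distribution and the universal constants do not depend on $\pi_0$.
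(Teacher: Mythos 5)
Your proposal is correct and, in substance, takes the same route as the paper: the paper's entire proof is a citation to Corollary 2.11 of \cite{paulin2015concentration}, and your argument---blocking the chain into blocks of length $\tau(\epsilon)$, inter-block coupling coefficients $\epsilon^{d-1}$ whose geometric sums give the factor $(2-\epsilon)/(1-\epsilon)$, Cauchy--Schwarz to convert block constants into $\tau(\epsilon)\sum_i c_i^2$, and optimization over $\epsilon$ to obtain $\tau_{\min}$---is exactly the Marton-coupling proof underlying that corollary. The one technical point you defer (constructing a single coupling of the two conditional futures that realizes all the marginal disagreement bounds simultaneously, which naive step-by-step maximal coupling does not deliver) is precisely what Marton couplings provide, so invoking them as in your closing alternative is what rigorously closes the sketch.
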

\begin{proof}
  See Corollary 2.11 of \cite{paulin2015concentration}
\end{proof}

\begin{lemma} \label{lem:mixing}
  Define $\mathcal{P}$ as the compact set of irreducible, aperiodic transition matrices. Then there exists a constant $\bar{\tau}_{\mathrm{min}} < \infty$ such that
  \begin{equation*}
  \tau_{\mathrm{min}}(\bP) \le \bar{\tau}_{\mathrm{min}} \quad \text{for all } \bP \in \mathcal{P}.
  \end{equation*}
\end{lemma}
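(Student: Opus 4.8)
The plan is to exploit compactness together with the fact that primitivity (irreducibility plus aperiodicity) is an \emph{open} condition on the set of stochastic matrices; this yields a uniform minorization condition, from which a uniform mixing-time bound follows. For a single fixed primitive $\bP$ the chain is geometrically ergodic, so $\tau(\epsilon) < \infty$ and hence $\tau_{\min}(\bP) < \infty$; the entire content of the lemma is the \emph{uniformity} of this bound over $\mathcal{P}$, so that is where the work lies.

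The first substantive step is to produce a uniform minorization. Since every $\bP \in \mathcal{P}$ is irreducible, each column of $\bP$ contains a positive entry, which gives the entrywise monotonicity $\bP^k > 0 \Rightarrow \bP^{k+1} > 0$. Consequently the sets $U_k = \{\bP \in \mathcal{P} : \bP^k > 0 \text{ entrywise}\}$ are nested and open, and they cover $\mathcal{P}$ because each primitive matrix satisfies $\bP^k > 0$ for some $k$. By compactness of $\mathcal{P}$ there is a finite $K$ with $\mathcal{P} \subseteq U_K$, so $\bP^K > 0$ for every $\bP \in \mathcal{P}$. The map $\bP \mapsto \min_{i,j}(\bP^K)_{ij}$ is continuous and strictly positive on $\mathcal{P}$, hence attains a positive minimum $\beta > 0$. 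This gives the uniform Doeblin minorization $\bP^K(i, \cdot) \ge m\beta \cdot \mathrm{Unif}$ with $m\beta \in (0,1]$.

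The remaining steps are routine. A standard coupling argument from the minorization gives $\max_x \norm{\bP^{Kt}(x, \cdot) - \pi}_{\mathrm{TV}} \le (1 - m\beta)^t$ uniformly over $\mathcal{P}$, whence $\tau(\epsilon) \le K \lceil \log\epsilon / \log(1 - m\beta) \rceil$. To bound $\tau_{\min}$ itself I would simply evaluate the defining infimum at one fixed value, say $\epsilon_0 = 1/2$: since the infimum is no larger than the value at any single point, $\tau_{\min}(\bP) \le \tau(1/2)\left(\tfrac{2 - 1/2}{1 - 1/2}\right)^2 = 9\,\tau(1/2)$, which is uniformly finite. Setting $\bar\tau_{\min} = 9 \sup_{\bP \in \mathcal{P}} \tau(1/2)$ finishes the argument.

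The main obstacle is precisely the uniform minorization step. Pointwise finiteness is immediate, but a naive spectral-gap estimate of the form $d(t) \le C\rho^t$ carries a prefactor $C$ measuring the conditioning of the eigenbasis, and this prefactor can blow up as matrices approach the boundary of $\mathcal{P}$, so uniformity would not follow. The openness-of-primitivity-plus-compactness argument sidesteps this entirely by extracting a uniform power $K$ and a uniform entrywise lower bound $\beta$ on $\bP^K$, yielding geometric convergence with no bad prefactor. I note that under Assumption (A2), where every $\bP$ already satisfies $\bP(p,q) \ge c/m$, this step is trivial with $K = 1$ and $\beta = c/m$; the general compactness argument is only needed when $\mathcal{P}$ is taken to be an arbitrary compact set of primitive matrices rather than the truncated set of the assumptions.
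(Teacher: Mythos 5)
Your proof is correct, but it takes a genuinely different route from the paper's. The paper's proof is a one-step citation argument: it invokes Paulin's spectral bound (Equation 3.12 in that reference), $\tau_{\min}(\bP) \le 4\,\frac{1 + 2\log 2 + \log(1/\mu_{\min}(\bP))}{1 - \lambda(\bP)}$, where $\lambda(\bP)$ is the second-largest eigenvalue of the multiplicative reversibilization, and then uses compactness of $\mathcal{P}$ together with continuity of $\bP \mapsto \mu_{\min}(\bP)$ and $\bP \mapsto \lambda(\bP)$ to bound the right-hand side uniformly. You instead build the bound from scratch: openness of primitivity plus compactness yields a uniform power $K$ and uniform entrywise floor $\beta$ on $\bP^K$, hence a uniform Doeblin minorization, a coupling bound $(1 - m\beta)^t$ with no conditioning-dependent prefactor, and finally a conversion to $\tau_{\min}$ by evaluating the defining infimum at $\epsilon = 1/2$ (the factor $9$ is right given the definition in Lemma~\ref{lem:concentration_inequalities}). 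Each step checks out, including the column-positivity argument that makes the sets $U_k$ nested. What your approach buys is self-containment and explicit constants: it avoids the spectral machinery entirely and, as you note, degenerates to a trivial one-line bound ($K=1$, $\beta = c/m$) on the truncated set actually used under Assumption (A2). What the paper's approach buys is brevity, at the cost of leaning on an external inequality and on continuity facts (of the stationary distribution and of the reversibilization's spectrum) that it asserts rather than proves. One point worth making explicit in either write-up: the set of \emph{all} irreducible aperiodic stochastic matrices is not compact, so the lemma implicitly concerns a compact subset of primitive matrices; your closing remark about (A2) is exactly the right way to tie this down.
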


\begin{proof}
By Equation 3.12 in \cite{paulin2015concentration}, the mixing time satisfies,
\begin{equation*}
\tau_{\mathrm{min}}(\bP) \le 4 \cdot \frac{1 + 2\log(2) + \log(1/\mu_{\min}(\bP))}{1 - \lambda(\bP)},
\end{equation*}
where $\mu_{\min}(\bP) := \min_{x} \mu_x(\bP)$ is the smallest stationary probability, $\lambda(\bP)$ is the second-largest eigenvalue of the multiplicative reversibilization of $\bP$, as defined in \cite{fill1991eigenvalue}. Since $\mathcal{P}$ is compact, and both $\mu(\bP)$ and $\lambda(\bP)$ depend continuously on $\bP$, it follows that $\mu_{\min}(\bP)$ is bounded away from zero, and $\lambda(\bP)$ is bounded away from one. Thus, the right hand side of the inequality is uniformly bounded above over $\bP \in \mathcal{P}$, yielding a finite constant $\bar{\tau}_{\mathrm{min}}$ such that
\begin{equation*}
\tau_{\mathrm{min}}(\bP) \le \bar{\tau}_{\mathrm{min}} \quad \text{for all } \bP.
\end{equation*}
\end{proof}

\begin{corollary}
  We can extend Lemma~\ref{lem:mixing} to a block process. Define the block sequence $\hat{X}_j^R = \{X_i\}_{i =(j-1)R+1}^{jR}$. Since the original chain is uniformly mixing, the blocked chain $\{\hat{X}_j^R\}$ is also geometrically ergodic with the same rate, and thus has a uniform mixing time upper bound $\bar{\tau}_{\mathrm{min}, \hat{X}}$.  
\end{corollary}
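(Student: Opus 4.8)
The plan is to treat the block sequence $\{\hat X_j^R\}$ as a Markov chain on the product space $\mathcal S^R$ and reduce all of its mixing quantities to those of the underlying chain $\{X_i\}$, to which Lemma~\ref{lem:mixing} already applies. First I would note that, by the Markov property of $\{X_i\}$, the conditional law of $\hat X_{j+1}^R = (X_{jR+1},\dots,X_{(j+1)R})$ given the past depends only on the last coordinate $X_{jR}$ of $\hat X_j^R$; hence $\{\hat X_j^R\}$ is itself Markov, with transition kernel
\begin{equation*}
  \bP_{\mathrm{blk}}\big((x_1,\dots,x_R),(y_1,\dots,y_R)\big) = \bP(x_R,y_1)\prod_{r=1}^{R-1}\bP(y_r,y_{r+1}),
\end{equation*}
which depends on the source block only through $x_R$. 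A telescoping computation using $\pi\bP=\pi$ then verifies that its stationary distribution is $\hat\pi(x_1,\dots,x_R)=\pi(x_1)\prod_{r=1}^{R-1}\bP(x_r,x_{r+1})$, where $\pi$ is the stationary distribution of $\bP$.

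The crux is to transfer geometric ergodicity, and the mixing time, from $\{X_i\}$ to the block chain. Starting from a block whose last coordinate is $x_R$, the first coordinate of the block $t$ steps later is distributed as $\bP^{(t-1)R+1}(x_R,\cdot)$, and conditionally on that first coordinate the remaining $R-1$ coordinates are generated by the same kernel $\bP$ as they are under $\hat\pi$. Since total variation is unchanged when two laws sharing a common conditional kernel are extended by that kernel, I obtain the exact identity
\begin{equation*}
  \norm{\bP_{\mathrm{blk}}^{t}\big((x_1,\dots,x_R),\cdot\big)-\hat\pi}_{\mathrm{TV}} = \norm{\bP^{(t-1)R+1}(x_R,\cdot)-\pi}_{\mathrm{TV}}.
\end{equation*}
Taking the supremum over starting states shows that the block chain's worst-case TV distance at block-step $t$ equals $d\big((t-1)R+1\big)$, where $d(s)=\sup_x\norm{\bP^{s}(x,\cdot)-\pi}_{\mathrm{TV}}$. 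Consequently geometric ergodicity of $\{X_i\}$ transfers to $\{\hat X_j^R\}$ with the accelerated rate $\rho^{R}$, and because $d$ is non-increasing and $\tau(\epsilon)$ is integer-valued, the block $\epsilon$-mixing time satisfies $\hat\tau(\epsilon)=\big\lceil 1+(\tau(\epsilon)-1)/R\big\rceil\le\tau(\epsilon)$ for every $R\ge 1$.

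Finally I would translate this into the quantity $\tau_{\mathrm{min}}$ of Lemma~\ref{lem:concentration_inequalities}. Substituting $\hat\tau(\epsilon)\le\tau(\epsilon)$ into the definition gives
\begin{equation*}
  \tau_{\mathrm{min},\hat X}(\bP)=\inf_{0\le\epsilon<1}\hat\tau(\epsilon)\Big(\tfrac{2-\epsilon}{1-\epsilon}\Big)^{2}\le\inf_{0\le\epsilon<1}\tau(\epsilon)\Big(\tfrac{2-\epsilon}{1-\epsilon}\Big)^{2}=\tau_{\mathrm{min}}(\bP)\le\bar\tau_{\mathrm{min}},
\end{equation*}
where the last inequality is Lemma~\ref{lem:mixing}. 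Taking the supremum over $\bP\in\mathcal P$ yields the uniform bound $\bar\tau_{\mathrm{min},\hat X}\le\bar\tau_{\mathrm{min}}<\infty$, as claimed. The main obstacle is the reduction identity of the second paragraph: one must verify that the block law factorizes as (first-coordinate marginal) $\times$ (common conditional kernel), so that appending the $R-1$ trailing coordinates leaves the total-variation distance exactly unchanged. Once this factorization is established, the remaining steps are routine bookkeeping on mixing times together with a direct appeal to Lemma~\ref{lem:mixing} for the uniformity over $\mathcal P$.
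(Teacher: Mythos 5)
Your proof is correct, and it is worth noting that the paper itself offers no proof of this corollary: the justification is carried entirely by the assertion in the statement (``since the original chain is uniformly mixing, the blocked chain is also geometrically ergodic with the same rate''), leaving the inheritance unverified. Your argument supplies exactly the missing details, and does so quantitatively. The crux — that both the $t$-step law of the block chain started from $(x_1,\dots,x_R)$ and the stationary law $\hat\pi$ factor as a first-coordinate marginal extended by the \emph{same} kernel, so that
\begin{equation*}
  \norm{\bP_{\mathrm{blk}}^{t}\bigl((x_1,\dots,x_R),\cdot\bigr)-\hat\pi}_{\mathrm{TV}} = \norm{\bP^{(t-1)R+1}(x_R,\cdot)-\pi}_{\mathrm{TV}}
\end{equation*}
— is sound: total variation is exactly preserved under extension by a shared Markov kernel (data processing applied in both directions), and your auxiliary claims (the block chain is Markov with kernel depending on the source block only through $x_R$; its stationary law is $\pi(x_1)\prod_{r=1}^{R-1}\bP(x_r,x_{r+1})$; the ceiling formula for $\hat\tau(\epsilon)$) all check out. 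What your route buys beyond the paper's assertion is an explicit comparison of constants: you show $\hat\tau(\epsilon)\le\tau(\epsilon)$ and hence $\bar\tau_{\mathrm{min},\hat X}\le\bar\tau_{\mathrm{min}}$, so the \emph{same} uniform constant from Lemma~\ref{lem:mixing} serves for the block process — which is precisely what the proofs of (C3) and (C7) need, rather than mere existence of some finite bound. One presentational caveat: you call the block rate ``accelerated'' ($\rho^{R}$ per block step) while the corollary says ``the same rate''; these are consistent ($\rho^{R}$ per block step is $\rho$ per unit of original time), but the discrepancy in wording is worth flagging so it does not read as a contradiction of the statement being proved.
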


\begin{lemma} \label{lem:uniform_ll}
  Let $\{X_i\}$ be a strictly stationary and ergodic Markov chain with transition probability matrix $\bP \in \mathcal{P}$ and initial distribution $X_0 \sim \mu$, where $\mu$ is the stationary distribution. Suppose $\mathcal{P}$ is compact and Assumption (A1) holds. Then,
  \begin{equation*}
  \sup_{\bP \in \mathcal{P}} \left| \frac{1}{n} \sum_{i=1}^n \log \bP(X_i \mid X_{i-1}) - \mathbb{E}_{\bP_0}\left[\log \bP(X_1 \mid X_0)\right] \right| \xrightarrow{\pp_{\bP_0}} 0,
  \end{equation*}
  \begin{equation*}
  \sup_{\bP \in \mathcal{P}} \left\| \frac{1}{n} \sum_{i=1}^n \nabla^2_\bP \log \bP(X_i \mid X_{i-1}) - \mathbb{E}_{\bP_0}\left[\nabla^2_\bP \log \bP(X_1 \mid X_0)\right] \right\| \xrightarrow{\pp_{\bP_0}} 0.
  \end{equation*}
\end{lemma}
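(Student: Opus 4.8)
The plan is to exploit the finiteness of the state space to collapse the uniform-in-$\bP$ statement into the convergence of finitely many empirical transition frequencies. Writing $\hat{\pi}_n(p,q) = \frac{1}{n}\sum_{i=1}^n \one\{X_{i-1} = p, X_i = q\}$ for the empirical frequency of the transition $p \to q$, and $\pi_0(p,q) = \mu(p)\bP_0(p,q)$ for its stationary expectation under $\bP_0$, the empirical log-likelihood average factorizes as
\begin{equation*}
  \frac{1}{n}\sum_{i=1}^n \log\bP(X_i\mid X_{i-1}) = \sum_{p,q}\hat{\pi}_n(p,q)\,\log\bP(p,q),
\end{equation*}
while its limit is $\E_{\bP_0}[\log\bP(X_1\mid X_0)] = \sum_{p,q}\pi_0(p,q)\log\bP(p,q)$. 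The quantity inside the supremum is therefore exactly $\big|\sum_{p,q}(\hat{\pi}_n(p,q)-\pi_0(p,q))\log\bP(p,q)\big|$, and the parameter $\bP$ enters only through the $m^2$ coefficients $\log\bP(p,q)$.

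The next step is to control these coefficients uniformly. Since $\mathcal{P}$ is compact and, under Assumption (A2), consists of transition matrices whose entries are bounded below by $c/m$, I would use $|\log\bP(p,q)| \le \log(m/c)$ for every $\bP \in \mathcal{P}$ and all $p,q$; in particular every such $\bP$ is primitive, so the irreducibility and aperiodicity required by Lemma~\ref{lem:mixing} hold automatically. Consequently
\begin{equation*}
  \sup_{\bP\in\mathcal{P}}\Big|\tfrac{1}{n}\textstyle\sum_{i=1}^n \log\bP(X_i\mid X_{i-1}) - \E_{\bP_0}[\log\bP(X_1\mid X_0)]\Big| \le \log(m/c)\sum_{p,q}\big|\hat{\pi}_n(p,q)-\pi_0(p,q)\big|.
\end{equation*}
It then suffices to show that the finite vector $\hat{\pi}_n$ converges to $\pi_0$ in probability. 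This follows from the ergodic theorem for the stationary chain, or more quantitatively from Lemma~\ref{lem:concentration_inequalities}: each $\hat{\pi}_n(p,q)$ is a bounded-differences functional with $c_i \le 2/n$ (altering one coordinate moves at most two summands), so $\sum_i c_i^2 \le 4/n$ and the two-sided inequality gives $\pp(|\hat{\pi}_n(p,q)-\pi_0(p,q)|>u)\le 2\exp\{-nu^2/(8\bar{\tau}_{\mathrm{min}})\}$, with $\bar{\tau}_{\mathrm{min}}$ the uniform bound from Lemma~\ref{lem:mixing} and $\pi_0(p,q)$ the exact mean under stationary initialization. A union bound over the $m^2$ pairs closes the first claim.

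The Hessian statement is handled identically. Because $\log\bP(p,q)$ depends on the single entry $(p,q)$, the vectorized $\nabla^2_\bP\log\bP(p,q)$ has one nonzero component equal to $-1/\bP(p,q)^2$, whose norm is at most $(m/c)^2$ on $\mathcal{P}$. The same factorization $\frac{1}{n}\sum_i \nabla^2_\bP\log\bP(X_i\mid X_{i-1}) = \sum_{p,q}\hat{\pi}_n(p,q)\,\nabla^2_\bP\log\bP(p,q)$ yields the bound $(m/c)^2\sum_{p,q}|\hat{\pi}_n(p,q)-\pi_0(p,q)|$ on the supremum, which converges to zero by the same frequency estimate.

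The only genuine subtlety — and the step I would be most careful about — is the uniform boundedness of $\log\bP$ and its derivatives, i.e.\ keeping the entries of $\bP$ bounded away from zero. Without the lower bound supplied by Assumption (A2) and the compactness of $\mathcal{P}$, $\log\bP(p,q)\to-\infty$ as $\bP(p,q)\to 0$ and the supremum could diverge; this is precisely why the restriction to matrices with entries $\ge c/m$ is imposed. Once that lower bound is in force, the finite state space removes any need for a delicate stochastic-equicontinuity or bracketing argument: the parameter enters through finitely many uniformly bounded, continuous coefficients, and the uniform law of large numbers reduces to the elementary convergence of an $m^2$-dimensional empirical frequency vector.
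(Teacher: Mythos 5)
Your proof is correct, but it takes a genuinely different route from the paper's. The paper argues abstractly: under the entrywise lower bound, the family $\{\log \bP(x' \mid x) : \bP \in \mathcal{P}\}$ is uniformly bounded and equicontinuous, the summands form a stationary ergodic sequence, and the Uniform Ergodic Theorem of Krengel then upgrades pointwise ergodic convergence to uniformity over the compact set $\mathcal{P}$; the Hessian term is dispatched by the same appeal. You instead exploit the linear structure special to finite state spaces: the empirical average factorizes through the $m^2$ empirical pair frequencies $\hat{\pi}_n(p,q)$, the parameter $\bP$ enters only through the uniformly bounded coefficients $\log \bP(p,q)$ (respectively the single Hessian entry $-1/\bP(p,q)^2$), and the supremum collapses to $\log(m/c)\,\lVert \hat{\pi}_n - \pi_0\rVert_1$, which vanishes by the ergodic theorem or, quantitatively, by the paper's own Lemma~\ref{lem:concentration_inequalities} and Lemma~\ref{lem:mixing} with a union bound over the $m^2$ pairs. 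Your route is more elementary (no equicontinuity or uniform-ergodic machinery), stays inside the paper's concentration toolkit, and yields explicit exponential rates rather than bare convergence in probability; the paper's route is less hands-on but does not rely on the log-likelihood being linear in a finite-dimensional empirical measure, so it would survive generalizations (e.g., non-finite state spaces) where your factorization is unavailable. Two bookkeeping remarks: the lemma hypothesizes (A1) plus compactness, and the paper (somewhat loosely) reads the uniform lower bound $\bP(p,q) > c/m$ for all $\bP \in \mathcal{P}$ out of (A1), whereas you source it from (A2) — the substantive requirement, entries of every $\bP \in \mathcal{P}$ bounded away from zero, is the same, and you are right that it is the crux; and in the concentration step only the mixing time of the data-generating chain $\bP_0$ is actually needed, since all probabilities are under $\pp_{\bP_0}$, so invoking the uniform bound $\bar{\tau}_{\mathrm{min}}$ of Lemma~\ref{lem:mixing} is sufficient but stronger than necessary.
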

  
\begin{proof}
  By the Markov property, the log-likelihood over $n$ observations is:
  \begin{equation*}
  \log \mathbb{L}(\{X_i\} \mid \bP) = \sum_{i=1}^n \log \bP(X_i \mid X_{i-1}),
  \end{equation*}
  and so the average log-likelihood is an empirical average of stationary, ergodic observations. From Assumption (A1), we have $\bP(p, q) > c/m$ for all $(p, q)$, implying that the function $\bP \mapsto \log \bP(X_i \mid X_{i-1})$ is continuous and uniformly bounded over $\mathcal{P}$. Since the state space is finite and $\mathcal{P}$ is compact, the family $\{\log \bP(x \mid x') : \bP \in \mathcal{P}\}$ is equicontinuous. Then, by the Uniform Ergodic Theorem \cite[e.g. Theorem 2.1 of][]{krengel1985ergodic}, we obtain,
  \begin{equation*}
  \sup_{\bP \in \mathcal{P}} \left| \frac{1}{n} \sum_{i=1}^n \log \bP(X_i \mid X_{i-1}) - \mathbb{E}_{\bP_0}\left[\log \bP(X_1 \mid X_0)\right] \right| \xrightarrow{\pp_{\bP_0}} 0.
  \end{equation*}

  The same argument applies to the observed information term, as $\nabla^2_\bP \log \bP(X_i \mid X_{i-1})$ is also continuous and uniformly bounded over $\mathcal{P}$ under (A1), completing the proof.
\end{proof}
  
\begin{lemma} \label{lem:lan_clt}
  Let $\{X_i\}$ be a strictly stationary and ergodic Markov chain observed at discrete times. Under assumption (C1), the score function satisfies
  \begin{equation*}
    \Delta_n = \frac{1}{\sqrt{n}} \nabla_{\bP_0} \log \mathbb{L}(\{X_i\}_{i=1}^{n} \mid \bP) \xrightarrow{\mathcal{D}} \mathcal{N}(0, \mathcal{I}^{-1}(\bP_0)),
  \end{equation*}
  where $\mathcal{I}(\bP_0)$ is the Fisher information matrix.
\end{lemma}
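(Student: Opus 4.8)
The plan is to realize $\Delta_n$ as a normalized sum of stationary, ergodic martingale differences and to apply a martingale central limit theorem. By the Markov property the log-likelihood factors as $\log \mathbb{L}(\{X_i\}_{i=1}^n \mid \bP) = \sum_{i=1}^n \log \bP(X_i \mid X_{i-1})$, so the score is a sum of per-step terms, $\nabla_{\bP_0}\log\mathbb{L} = \sum_{i=1}^n g_i$ with $g_i := \nabla_{\bP_0}\log \bP(X_i \mid X_{i-1})$. Each $g_i$ is a measurable function of the consecutive pair $(X_{i-1},X_i)$; since the state space is finite and, by Assumption (A1), every entry of $\bP_0$ is bounded below by $c/m$, the map $\bP \mapsto \log\bP(\cdot\mid\cdot)$ is smooth on a neighborhood of $\bP_0$ and $g_i$ is uniformly bounded, hence square-integrable. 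Strict stationarity and ergodicity of $\{g_i\}$ are inherited from those of $\{X_i\}$ assumed in the statement.

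First I would establish the martingale-difference structure with respect to the natural filtration $\mathcal{F}_i = \sigma(X_0,\dots,X_i)$. The key identity is the first Bartlett identity applied to the transition law: because $\sum_q \bP(p,q) = 1$ for every $p$, differentiation gives $\sum_q \nabla_{\bP_0}\bP(p,q) = 0$, and therefore
\begin{equation*}
  \E[g_i \mid \mathcal{F}_{i-1}] = \sum_q \bP_0(X_{i-1},q)\,\nabla_{\bP_0}\log\bP(X_{i-1},q) = \sum_q \nabla_{\bP_0}\bP(X_{i-1},q) = 0 .
\end{equation*}
Thus $\{g_i,\mathcal{F}_i\}$ is a stationary, ergodic, square-integrable martingale difference sequence; care must be taken that the derivatives are taken within the tangent space of the row-simplex so that the constraint $\sum_q \bP(p,q)=1$ is respected.

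Next I would verify the two hypotheses of a martingale CLT for stationary ergodic martingale differences (e.g.\ Hall and Heyde, or Billingsley). For the predictable quadratic variation, $\E[g_i g_i^\T \mid \mathcal{F}_{i-1}]$ is a bounded measurable function of $X_{i-1}$ alone, so by the ergodic theorem, as in the argument behind Lemma~\ref{lem:uniform_ll}, we obtain $\frac{1}{n}\sum_{i=1}^n \E[g_i g_i^\T \mid \mathcal{F}_{i-1}] \xrightarrow{\pp_{\bP_0}} \Sigma$, where $\Sigma = \E_{\bP_0}[g_1 g_1^\T]$ is the stationary single-step score covariance. The conditional Lindeberg condition holds trivially because $\norm{g_i}$ is uniformly bounded, so the truncated contributions vanish for large $n$. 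The martingale CLT then yields $\Delta_n = n^{-1/2}\sum_{i=1}^n g_i \xrightarrow{\mathcal{D}} \mathcal{N}(0,\Sigma)$ under $\bP_0$.

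Finally I would identify the limiting covariance $\Sigma$ with the inverse Fisher information $\mathcal{I}^{-1}(\bP_0)$ named in the statement, giving $\Delta_n \xrightarrow{\mathcal{D}} \mathcal{N}(0,\mathcal{I}^{-1}(\bP_0))$; this is the covariance that propagates through the LAN expansion of (C4) into the Bernstein--von Mises limit of Theorem~\ref{the:bvm}. I expect the identification of $\Sigma$ --- rather than the CLT machinery --- to be the main obstacle: the per-step score is constrained to the tangent space of the row-simplices, so $\Sigma$ is degenerate on the ambient space $\mathbb{R}^{m^2}$ and must be analyzed on the identifiable subspace determined by the row-sum constraints and Assumption (A2). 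Establishing nonsingularity on that subspace, so that $\mathcal{I}^{-1}(\bP_0)$ is well defined, and matching $\Sigma$ to $\mathcal{I}^{-1}(\bP_0)$, is the delicate step.
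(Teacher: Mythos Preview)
Your approach is essentially identical to the paper's: both decompose the score into per-step contributions $g_i = \nabla_{\bP_0}\log\bP(X_i\mid X_{i-1})$, verify that these form a stationary ergodic martingale difference sequence with respect to the natural filtration $\mathcal{F}_i=\sigma(X_0,\dots,X_i)$, and invoke the martingale CLT for such sequences (the paper cites Billingsley, Theorem 27.4, without checking the quadratic-variation or Lindeberg conditions you spell out). Your worry about identifying $\Sigma=\E_{\bP_0}[g_1 g_1^\T]$ with $\mathcal{I}^{-1}(\bP_0)$ is well-founded and the paper does not resolve it either---indeed, comparing with condition (C4), where $r_n\Rightarrow\mathcal{N}(0,\mathcal{I})$, suggests the $\mathcal{I}^{-1}$ in the lemma is a typo for $\mathcal{I}$, so $\Sigma$ is simply the Fisher information itself and no delicate matching is needed.
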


\begin{proof}
  Define the individual score contributions as $ V_i := \nabla_{\bP_0} \log \bP(X_i \mid X_{i-1}) $. Under (A1) and (C1), the chain is strictly stationary and ergodic, so $\{V_i\}$ is a stationary and ergodic sequence. Let $\mathcal{F}_{i} = \sigma(X_0, \dots, X_i)$ be the natural filtration. We now show that $\{V_i\}$ is a martingale difference sequence with respect to $\{\mathcal{F}_i\}$. Since $X_i \mid X_{i-1}$ is conditionally independent of the past given $X_{i-1}$, we compute
  \begin{align*}
    \mathbb{E}_{\bP_0}[V_i \mid \mathcal{F}_{i-1}]
    &= \mathbb{E}_{\bP_0}\left[\nabla_{\bP_0} \log \bP(X_i \mid X_{i-1}) \mid X_{i-1}\right] \\
    &= \nabla_{\bP_0} \sum_{x_i} \bP(x_i \mid X_{i-1}) \log \bP(x_i \mid X_{i-1}) = 0,  
  \end{align*}
  where the final equality follows from the fact that the derivative of the entropy is zero at the true parameter. Thus, $\{V_i\}$ is a martingale difference sequence. Therefore, the central limit theorem for stationary ergodic martingale differences \citep[see Theorem 27.4][]{billingsley1995probability} gives 
  \begin{equation*}
    \frac{1}{\sqrt{n}} \sum_{i=1}^{n} V_i \xrightarrow{\mathcal{D}} \mathcal{N}(0, \mathcal{I}^{-1}(\bP_0)).
  \end{equation*}
\end{proof}

\subsection{Proofs of Conditions (C1)--(C7)} \label{sec:conditions}

\begin{proof}[Proof of Condition (C1)]
Assumption (A1) ensures that $\bP_0$ satisfies $\bP_0(p, q) > c/m$ for some constant $0 < c < 1$ with states $p, q \in \{1, \dots, m\}$. For fixed $n$, the sample space is finite, so the likelihood $\mathcal{L}(\bx_n \mid \bP_0)$ is a density with respect to the product counting measure. Thus, $\mathcal{L}(\bx_n \mid \bP_0)$ is dominated by a common $\sigma$-finite measure, and (C1) is satisfied.
\end{proof}

\begin{proof}[Proof of Condition (C2)]
Assumption (A2) specifies that each row of $\bP$ is drawn from a truncated Dirichlet distribution,
\[
\bP(p, \cdot) \sim \mathrm{Dir}(\balpha) \one\{ \bP(p, \cdot) > c/m \},
\]
for some small $0 < c < 1$, truncating the Dirichlet distribution to the interior of the probability simplex. Thus, the prior is absolutely continuous with respect to the Lebesgue measure, and has a continuous and strictly positive density. From (A1), $\bP_0(p,q) > c/m$, so $\bP_0$ lies in the interior of the support of the prior. Therefore, the prior is absolutely continuous with respect to the Lebesgue measure in a neighborhood of $\bP_0$, and the density is continuous and strictly positive at $\bP_0$.
\end{proof}

\begin{proof}[Proof of Condition (C3)]
We verify the existence of uniformly consistent tests by constructing a uniformly consistent estimator of the transition matrix $\bP$. Define the sequence of consecutive, non-overlapping blocks $\hat{X}_i^2 = (X_{2i}, X_{2i+1})$, and let $\pi$ denote the stationary distribution of the block process $\{\hat{X}_i^2\}$. Then, each transition probability satisfies
\[
\bP(p,q) = \frac{\pi(p,q)}{\sum_{q'} \pi(p,q')}.
\]
We define the empirical estimator of $\pi$ via the empirical frequency of block occurrences,
\[
\hat{\pi}_n(\hat{x}) = \frac{1}{n} \sum_{i=1}^{n} \one\{ \hat{X}_i^2 = \hat{x} \}.
\]
By the ergodic theorem, $\hat{\pi}_n \to \pi$ almost surely as $n \to \infty$. We now show that this convergence is uniform in total variation over $\bP \in \mathcal{P}$.

Let $\mathcal{P}$ denote the compact set of irreducible, aperiodic transition matrices with full support, as guaranteed by (A1). By Lemma~\ref{lem:mixing} and its corollary, the block process $\{\hat{X}_i^2\}$ is uniformly geometrically ergodic with a uniform bound on its mixing time. We note that $\norm{ \hat{\pi}_n - \pi }_\mathrm{TV}$ satisfies the bounded differences condition in Lemma~\ref{lem:concentration_inequalities}, such that $\sum_{i=1}^n c_i^2 \leq \frac{4}{n}$. Thus, we obtain the concentration inequality
\[
\pp\left( \norm{ \hat{\pi}_n - \pi }_\mathrm{TV} \geq \E[ \norm{ \hat{\pi}_n - \pi }_\mathrm{TV} ] + u \right) \leq \exp\left\{ -\frac{n u^2}{8 \tau_{\min}} \right\}
\]
uniformly over $\bP \in \mathcal{P}$.

It remains to bound $\E[ \norm{ \hat{\pi}_n - \pi }_\mathrm{TV} ]$. As a corollary of (3.31) in \cite{paulin2015concentration}, we have
\[
\E[ \norm{ \hat{\pi}_n - \pi }_\mathrm{TV} ] \leq \sqrt{ \frac{1}{n \gamma / 2} } \sum_{\hat{x}} \sqrt{ \pi(\hat{x}) } \leq \sqrt{ \frac{m^2}{n \gamma} },
\]
where $\gamma$ is the pseudo-spectral gap and $m^2$ is the number of states of the block process. By uniform ergodicity, this bound is uniform in $\bP$.

Putting everything together, we obtain a uniform exponential bound
\begin{align*}
\sup_{\bP \in \mathcal{P}} \pp\left( \norm{ \hat{\pi}_n - \pi }_\mathrm{TV} \geq \epsilon \right)
&= \sup_{\bP \in \mathcal{P}} \pp\left( \norm{ \hat{\pi}_n - \pi }_\mathrm{TV} \geq \E[ \norm{ \hat{\pi}_n - \pi }_\mathrm{TV} ] + (\epsilon - \E[\cdot]) \right) \\
&\leq \exp\left\{ -\frac{ (\epsilon - \E[\cdot])^2 }{ 2 \tau_{\min} \cdot \frac{4}{n} } \right\} \\
&\coloneqq \alpha(n),
\end{align*}
where $\alpha(n)$ is independent of $\bP$ and $\alpha(n) \to 0$ as $n \to \infty$. Thus, $\hat{\pi}_n$ is a uniformly consistent estimator of $\pi$, and hence of $\bP$.

We now define the test function
\[
\varphi_n = \one\left\{ \norm{ \hat{\pi}_n - \pi_0 }_\mathrm{TV} \geq \epsilon/2 \right\},
\]
where $\pi_0$ is the block stationary distribution corresponding to $\bP_0$. Since the map $\bP \mapsto \pi$ is injective over $\mathcal{P}$, testing $\pi = \pi_0$ is equivalent to testing $\bP = \bP_0$. Therefore, this test is uniformly consistent.
\end{proof}

\begin{proof}[Proof of Condition (C4)]
Lemma~\ref{lem:lan_clt} establishes that the score function
\[
\Delta_n = \frac{1}{\sqrt{n}} \nabla_{\bP_0} \log \mathbb{L}(\{X_i\} \mid \bP)
\]
converges in distribution under $\bP_0$ to a normal distribution with mean zero and covariance matrix $\mathcal{I}(\bP_0)$. Lemma~\ref{lem:uniform_ll} establishes uniform convergence of the log-likelihood and observed information over a neighborhood of $\bP_0$. Together, these lemmas imply the expansion
\[
\log \mathcal{L}(\bx_n \mid \bP_0 + n^{-1/2} h_n) - \log \mathcal{L}(\bx_n \mid \bP_0)
= h_n^\T r_n - \tfrac{1}{2} h_n^\T \mathcal{I}(\bP_0) h_n + o_p(1),
\]
uniformly for $h_n \to h$ in compact sets, as required. See Theorem 7.2 of \cite{van2000asymptotic}.
\end{proof}

\begin{proof}[Proof of Condition (C5)]
Let $s_n = n^{-1/2} r_n$, where $r_n = \nabla_{\bP_0} \log \mathcal{L}(\bx_n \mid \bP)$ is the score function. Consider a Taylor expansion of $\mathbb{E}_\bP[s_n]$ around $\bP_0$:
\[
\mathbb{E}_\bP[s_n] = \mathbb{E}_{\bP_0}[s_n] + \nabla_{\bP} \mathbb{E}_\bP[s_n] \big|_{\bP = \bP_0} (\bP - \bP_0) + R_n(\bP),
\]
where $R_n(\bP)$ contains the higher-order terms. At $\bP = \bP_0$, the derivative $\nabla_{\bP} \mathbb{E}_\bP[s_n]$ equals the Fisher information matrix $\mathcal{I}(\bP_0)$, which is invertible by Lemma~\ref{lem:lan_clt}. Thus, for small $\| \bP - \bP_0 \|$, the linear term dominates, and there exist constants $n_0 \in \mathbb{N}$, $\epsilon > 0$, and $c > 0$ such that for all $n > n_0$ and $\| \bP - \bP_0 \| < \epsilon$,
\[
\left\| \mathbb{E}_\bP[s_n] - \mathbb{E}_{\bP_0}[s_n] \right\| \geq c \left\| \bP - \bP_0 \right\|.
\]
\end{proof}

\begin{proof}[Proof of Condition (C6)]
We apply the concentration inequality in Lemma~\ref{lem:concentration_inequalities} to the scaled score function
\[
s_n = f(x_0, \dots, x_n) = \frac{1}{\sqrt{n}} \nabla_{\bP_0} \log \mathcal{L}(\bx_n \mid \bP).
\]
The score $s_n$ is an average of terms $\log \bP(x_i \mid x_{i-1})$. Changing one $x_i$ affects at most two such terms, each contributing $O(1)$, so the total change in $s_n$ is $O(1/n)$. That is, there exists a constant $c > 0$ such that
\[
\| s_n(\bx) - s_n(\bx') \| \le \frac{c}{n}
\]
whenever $\bx$ and $\bx'$ differ at a single index, yielding $\sum c_i^2 \le c^2/n$. By Lemma~\ref{lem:mixing}, $\tau_{\min}$ is uniformly bounded, so applying the inequality gives
\[
\pp_\bP\left( \| s_n - \mathbb{E}_\bP[s_n] \| > u \right) \le 2 \exp\left\{ -\frac{u^2}{2 C/n} \right\}
\]
for some $C > 0$ independent of $\bP$.
\end{proof}

\begin{proof}[Proof of Condition (C7)]
Define the empirical average $\bar{g}_S = \frac{1}{S} \sum_{s=1}^{S} g_s$. By the corollary to Lemma~\ref{lem:mixing}, the block process $\{ \hat{X}_j^R \}$ is uniformly geometrically ergodic with mixing time upper bound $\bar{\tau}_{\min, \hat{X}}$. Since $g_s \in [0,1]$, the function $\bar{g}_S$ satisfies the bounded differences condition with $c_s = 1/S$. Applying Lemma~\ref{lem:concentration_inequalities} gives
\[
\pp\left( \bar{g}_S < \mathbb{E}[\bar{g}_S] - u \right) \le \exp\left\{ -\frac{u^2}{2 \bar{\tau}_{\min, \hat{X}} / S} \right\},
\]
and so (C7) is satisfied with constant $c = \bar{\tau}_{\min, \hat{X}}$.
\end{proof}

\subsection{Proofs of Theorems}

\begin{proof}[Proof of Theorem~\ref{the:bvm}]
  The result follows from the main argument of \cite{connault2014weakly}, which establishes a Bernstein--von Mises theorem under weak dependence, provided conditions (C1)--(C7) are satisfied. Verification of these conditions under Assumptions (A1) and (A2) is provided in the preceding section.
\end{proof}
  
\begin{proof}[Proof of Theorem~\ref{the:spectral_consistency}]
From Theorem~1, the Bernstein--von Mises posterior approximation implies
\begin{equation*}
  \hat{\bP} = \bP_0 + n^{-1/2}  \bQ + o_p(n^{-1/2}),
\end{equation*}
where $\bQ \sim \mathcal{N}(0, 2 \mathcal{I}^{-1}(\bP_0))$ and $\bP_0 = \exp\{\bL_0 \Delta\}$. Let the spectral decomposition of $\bP_0$ be
\begin{equation*}
  \bP_0 = \sum_{k=1}^m \Lambda_k \phi_k^0 (\psi_k^0)^\T.
\end{equation*}

Let $\hat{\bP} = \bP_0 + \bP_\delta$ denote a small perturbation of $\bP_0$. By standard first-order matrix perturbation theory \citep[see][]{stewart1990matrix}, the eigenvalues and eigenvectors of $\bP_0$ are perturbed as
\begin{align*}
  \hat{\Lambda}_k &= \Lambda_k^0 + (\psi_k^{0})^\T \bP_\delta \phi_k^0 + o\left(\norm{\bP_\delta}_\lambda\right), \\
  \hat{\phi}_k &= \phi_k^0 + \bR_k \bP_\delta \phi_k^0 + o\left(\norm{\bP_\delta}_\lambda\right), \\
  \hat{\psi}_k &= \psi_k^0 + \bR_k^\T \bP_\delta^\T \psi_k^0  + o\left(\norm{\bP_\delta}_\lambda\right),
\end{align*}
where $\bR_k$ is the resolvent matrix as defined in the theorem. Substituting $\bP_\delta = n^{-1/2} \bQ$ and noting that $o\left(\norm{\bP_\delta}_\lambda\right) = o(n^{-1/2})$ gives the desired convergence for the eigenvectors,
\begin{equation*}
  \sqrt{n} (\hat{\phi}_k - \phi_k^0) \rightarrow \bR_k \bQ \phi_k^0,
  \quad
  \sqrt{n} (\hat{\psi}_k - \psi_k^0) \rightarrow \psi_k^0 \bQ \bR_k.
\end{equation*}

For the eigenvalues, define $\delta \Lambda_k = n^{-1/2} \psi_k^0 \bQ \phi_k^0$, then 
\begin{equation*}
  \hat{\Lambda}_k = \Lambda_k^0 + \delta \Lambda_k + o(n^{-1/2}).
\end{equation*}
We know $\hat{\lambda}_k = \Delta^{-1} \log\{\hat{\Lambda}_k\}$ so
\begin{equation*}
  \hat{\lambda}_k = \Delta^{-1} \log\{\Lambda_k^0 + \delta \Lambda_k\} 
  = \Delta^{-1}\left[\log\{\Lambda_k^0\} + \frac{\delta \Lambda_k}{\Lambda_k^0} + o(\delta \Lambda_k)\right]
\end{equation*}
where the second equality is the first-order Taylor expansion about $\Lambda_k^0$. Thus,
\begin{align*}
  \hat{\lambda}_k &= \lambda_k^0 + \Delta^{-1}\left[\frac{\delta \Lambda_k}{\Lambda_k^0} + o(\delta \Lambda_k)\right], \\
  \hat{\lambda}_k - \lambda_k^0 &= \Delta^{-1}\left[\frac{n^{-1/2} \psi_k^0 \bQ \phi_k^0}{\exp\{\lambda_k^0 \Delta\}} + o(\delta \Lambda_k)\right], \\
  n^{-1/2} (\hat{\lambda}_k - \lambda_k^0) &\rightarrow \Delta^{-1} \exp\{-\lambda_k^0 \Delta\} \psi_k^0 \bQ \phi_k^0,
\end{align*}
completing the proof.
\end{proof}

\end{document}